\definecolor{color1}{RGB}{199,209,232}
\definecolor{color2}{RGB}{230,231,233}
\DeclareMathOperator*{\argmax}{argmax} 
\DeclareMathOperator*{\maximize}{maximize} 
\DeclareMathOperator*{\minimize}{minimize} 
\DeclareMathOperator*{\subjectto}{subject\hspace{3pt} to \hspace{3pt}} 
\newtheorem{theorem}{Theorem}
\newtheorem{lemma}[theorem]{Lemma}
\newtheorem{remark}{Remark}
\begin{document}
	
	\title{Spatial Path Index Modulation in mmWave/THz Band Integrated Sensing and Communications}
	
	
	\author{\IEEEauthorblockN{Ahmet M. Elbir, \textit{Senior Member, IEEE}, Kumar Vijay Mishra, \textit{Senior Member, IEEE},  \\
			Asmaa Abdallah, \textit{Member, IEEE}, Abdulkadir Celik, \textit{Senior Member, IEEE}, \\
			and Ahmed M. Eltawil, \textit{Senior Member, IEEE}  }
		
		\thanks{The conference precursor of this work has been presented in the 2023 IEEE Radar Conference~\cite{elbir_SPIM_MMWAVE_RadarConf_Elbir2022Nov}.
		}
		\thanks{A. M. Elbir is with  King Abdullah University of Science and Technology, Thuwal 23955, Saudi Arabia; and the Duzce University, Duzce 81620, Turkey (e-mail: ahmetmelbir@ieee.org).}
		\thanks{K. V. Mishra is with the United States DEVCOM Army Research Laboratory, Adelphi, USA (e-mail: kvm@ieee.org). }	
		\thanks{A. Abdallah, A. {C}elik and A. M. Eltawil are with King Abdullah University of Science and Technology, Thuwal 23955, Saudi Arabia (e-mail: asmaa.abdallah@kaust.edu.sa, abdulkadir.celik@kaust.edu.sa, ahmed.eltawil@kaust.edu.sa). } 
		
	}
	
	\maketitle


	\vspace{-64pt}
	\begin{abstract}
		As the demand for wireless connectivity continues to soar, the fifth generation and beyond wireless networks are exploring new ways to efficiently utilize the wireless spectrum and reduce hardware costs. One such approach is the integration of sensing and communications (ISAC) paradigms to jointly access the spectrum. Recent ISAC studies have focused on upper millimeter-wave and low terahertz bands to exploit ultrawide bandwidths. At these frequencies, hybrid beamformers that employ fewer radio-frequency chains are employed to offset expensive hardware but at the cost of lower multiplexing gains. Wideband hybrid beamforming also suffers from the beam-split effect arising from the subcarrier-independent (SI) analog beamformers. To overcome these limitations, we introduce a spatial path index modulation (SPIM) ISAC architecture, which transmits additional information bits via modulating the spatial paths between the base station and communications users. We design the SPIM-ISAC beamformers by estimating both radar and communications parameters through our proposed beam-split-aware algorithms. We then develop a family of hybrid beamforming techniques -- hybrid, SI, subcarrier-dependent analog-only, and beam-split-aware beamformers -- for SPIM-ISAC. Numerical experiments demonstrate that the proposed approach exhibits significantly improved spectral efficiency performance in the presence of beam-split when compared with even fully digital non-SPIM beamformers.

	\end{abstract}

	\begin{IEEEkeywords}
		Integrated sensing and communications, massive MIMO, millimeter-wave, spatial modulation, terahertz.
	\end{IEEEkeywords}
	\vspace{-12pt}
	\section{Introduction}
	\label{sec:Introduciton}
	\IEEEPARstart{F}{or} several decades, radar and communications systems have exclusively operated in different frequency bands as allocated by the regulatory bodies to minimize the interference to each other~\cite{mishra2019toward}. Modern radar systems operate in various portions of the spectrum -- from very-high-frequency (VHF) to Terahertz ({THz})~\cite{elbir_thz_jrc_Magazine_Elbir2022Aug} -- for different applications, such as over-the-horizon, air surveillance, meteorological, military, and automotive radars \cite{mishra2023signal}. Similarly, communications systems have progressed from ultra-high-frequency (UHF) to millimeter-wave (mmWave) in response to the demand for new services, the massive number of users, and the applications with high data rate demands \cite{jrc_TCOM_Liu2020Feb,elbir2021JointRadarComm}. As a result, there has been substantial interest in designing \textit{integrated sensing and communications} (ISAC) systems that jointly access the scarce radio spectrum on an integrated hardware platform~\cite{mishra2019toward,liu2020co,duggal2020doppler,sedighi2021localization}.  In particular, as the allocation of the spectrum beyond 100 GHz is underway, ISAC is currently witnessing frantic research activity to simultaneously achieve high-resolution sensing and high data rate communications system architecture at both upper mmWave~\cite{mishra2019toward,thz_isac5_Petrov2019May} and low THz frequencies~\cite{elbir2021JointRadarComm,elbir_thz_jrc_Magazine_Elbir2022Aug}. 
	
	

	Signal processing at both mmWave and THz-band brings several new challenges, such as severe path loss, short transmission distance, and \textit{beam-split}~\cite{beamSquint_FeiFei_Wang2019Oct,delayPhasePrecoding_THz_Dai2022Mar,trueTimeDelayBeamSquint,elbir_THZ_CE_ArrayPerturbation_Elbir2022Aug}. To overcome these challenges at reduced hardware costs, hybrid analog and digital beamforming architectures are employed in a massive multiple-input multiple-output (MIMO) array configuration~\cite{heath2016overview,elbir2022Nov_Beamforming_SPM}. For higher spectral efficiency (SE) and lower complexity, massive MIMO systems employ wideband signal processing, wherein subcarrier-dependent (SD) baseband and subcarrier-independent (SI) analog beamformers are used. In particular, the weights of the analog beamformers are subject to a single (sub-)carrier frequency~\cite{alkhateeb2016frequencySelective}. Therefore, the beam generated across the subcarriers points to different directions causing {beam-split} (also referred to as beam-squint) phenomenon~\cite{elbir_THZ_CE_ArrayPerturbation_Elbir2022Aug,beamSquint_FeiFei_Wang2019Oct}. Compared to mmWave frequencies, the impact of beam-split is more severe in THz massive MIMO  because of wider system bandwidths in the latter (see Fig.~\ref{fig_ArrayGain_BS}). It is, therefore, highly desired to address beam-split for reliable system performance.

	The existing techniques to compensate for beam-split largely rely on additional hardware components, e.g., time-delayer (TD) \cite{trueTimeDelayBeamSquint,delayPhasePrecoding_THz_Dai2022Mar} and SD phase shifter networks~\cite{beamSquintAwareHB_SD_You2022Aug} to virtually realize SD analog beamformers. 
	However, these approaches are inefficient with respect to both cost and power~\cite{elbir_thz_jrc_Magazine_Elbir2022Aug}. Note that, the estimation of the communications channel and radar target direction-of-arrival (DoA) are handled in the digital domain. Hence, beam-split compensation for these tasks does not require additional hardware components. It is, therefore, possible to employ SD analog beamformers but the additional (analog) hardware is used only for hybrid (analog/digital) beamformer design. 
	
	Despite the cost-power benefits of hybrid analog/digital beamformers,
	they are limited in multiplexing gain~\cite{heath2016overview,elbir2022Nov_Beamforming_SPM}. This is of particular concern for future wireless communications, where improved energy/spectral efficiency (EE/SE) is a key consideration~\cite{heath2016overview}.  Lately, index modulation (IM) has attracted interest as a means to achieve improved EE and SE than the conventional modulation schemes~\cite{indexMod_Survey_Mao2018Jul,hodge2023index}. In IM, the transmitter encodes additional information in the indices of the transmission media such as subcarriers~\cite{hodge2020intelligent} (SIM), antennas~\cite{antenna_grouping_SM,hodge2019reconfigurable,jrc_spim_sm_Ma2021Feb}, and spatial paths~\cite{spim_bounds_JSTSP_Wang2019May,spim_BIM_TVT_Ding2018Mar,spim_GBM_Gao2019Jul,spim_onGSM_He2017Sep,spim_lowComplexGSM_Shi2021Jan,spim_GBMM_Guo2019Jul} (see Fig.~\ref{fig_IM_schemes}). {Besides, there are also techniques based on regulating the energy spectral densities of spatial waveforms~\cite{waveforDesign_SpectralMod1_Yu2022Apr,waveforDesign_SpectralMod2_Yang2022Dec}, {and IM-aided pilot design~\cite{Mao2021ISAC}. }}


	{
		\subsection{Motivation}
		In spatial modulation (SM), both antenna~\cite{spim_secureSM_SubarraySelection_Shu2020Nov} and path index~\cite{spim_secureGSM_Yang2020Oct,spim_secureGSM_FiniteAlphabet_Xia2020Dec} modulation can be performed.  Therefore, we categorize these SM techniques as spatial antenna/path index modulation (SAIM/SPIM), respectively. Specifically, in SAIM, the additional information bits are conveyed via switching the antennas. {However, a group of the antennas are silent at each transmission slot due to antenna switching,  leading to a reduction in beamforming gain compared to schemes that fully utilize all antennas.}  In contrast, SPIM enjoys employing all antennas while a switching network is used between the radio-frequency (RF) chains and the phase shifters, thereby, fully utilizing all antennas for beamforming gain. Furthermore, the implementation of SPIM does not require complex additional hardware since the SPIM process is governed by the channel path conditions.  Therefore, in this paper, we focus on SPIM. 
		
		Thanks to transmitting additional information bits via path indices, SPIM-aided systems have shown significant performance improvement in the communications-only systems. In particular, the SPIM-aided systems exhibit higher SE than that the use of fully digital beamformers in communications-only systems~\cite{spim_GBMM_Guo2019Jul}.  This creates the motivation to develop SPIM-based frameworks for THz systems in order to recover the SE loss due to beam-split.

		\subsection{Prior Work}
		SM techniques have been recently introduced to mmWave-MIMO communications systems while there are a few works on ISAC. Therefore, the prior works can be categorized as SM for  communications-only and ISAC systems, respectively.
		
		\subsubsection{SM for Communications-only Systems}
		SPIM-based communications scenario was considered, wherein the indices of the spatial paths were modulated to create different \emph{spatial patterns} for mmWave-MIMO in \cite{spim_BIM_TVT_Ding2018Mar}. The same approach was also exploited in~\cite{spim_GBM_Gao2019Jul} by employing lens arrays at both transmitter and receiver. In~\cite{spim_lowComplexGSM_Shi2021Jan}, a low-complexity approach was proposed for SPIM with a joint design of analog and digital beamformers. A similar architecture was also deployed for secure SAIM~\cite{spim_secureSM_SubarraySelection_Shu2020Nov} and SPIM~\cite{spim_secureGSM_Yang2020Oct,spim_secureGSM_FiniteAlphabet_Xia2020Dec} in the presence of eavesdropping users. Moreover, SE~\cite{spim_bounds_JSTSP_Wang2019May,spim_onGSM_He2017Sep} and EE~\cite{spim_EnergyE_Raafat2019Dec} have been utilized as performance metrics for analog-only beamforming and receiver design. In order to extract the spatial paths for SM, a super-resolution channel estimation approach was proposed in~\cite{spim_GSM_CE_Chu2019May}. In addition to the SM techniques employed over the antennas at the BS or communication user, SM over the reflecting surface elements has also been considered \cite{hodge2019reconfigurable,spim_IRS_SM_Yurduseven2020Aug,spim_IRS_SM_antennaSelection_Ma2020Jul}. Different from the aforementioned model-based techniques, a machine learning based approach was also proposed in~\cite{spim_FL_Elbir2021Jun} for SPIM.

		\begin{figure}
			\centering
			{\includegraphics[draft=false,width=1.\columnwidth]{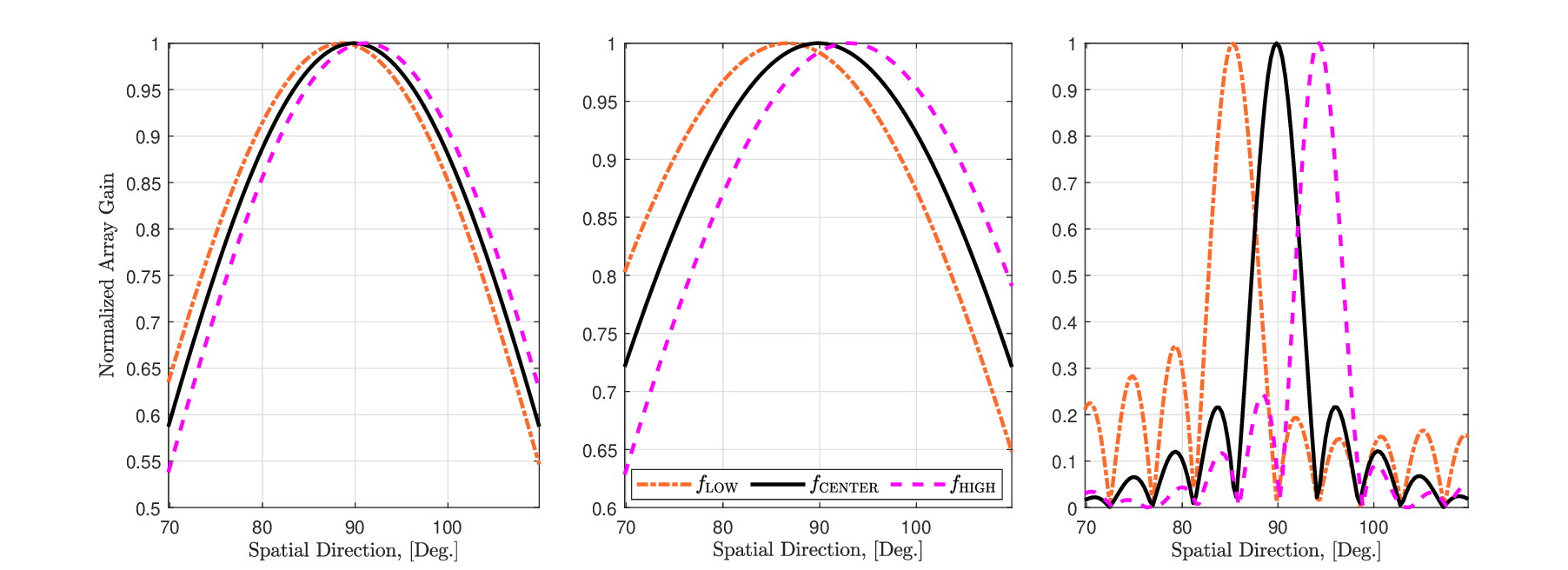} } 
			\caption{Normalized array gain with respect to spatial direction at the low, center, and high-end subcarriers for (left) $f_{\mathrm{CENTER}}=3.5$ GHz, $B=0.1$ GHz; (middle) $f_{\mathrm{CENTER}}=28$ GHz, $B=2$ GHz; and (right) $f_{\mathrm{CENTER}}=300$ GHz, $B=30$ GHz, respectively.    	}
			\label{fig_ArrayGain_BS}
		\end{figure}


				\begin{table*}[]
						\caption{Comparison With The State-of-The-Art 
							}
						\footnotesize
						\label{tableSummary}
						\centering
						\begin{tabular}{p{0.13\textwidth}p{0.08\textwidth}p{0.05\textwidth}p{0.08\textwidth}p{0.1\textwidth}p{0.05\textwidth}p{0.05\textwidth}p{0.05\textwidth}p{0.05\textwidth}}
								\hline 
								\hline
								\cellcolor{color2}\centering\arraybackslash q.v.  &\centering\arraybackslash mmWave \cellcolor{color1} & \cellcolor{color2}\centering\arraybackslash THz		& \cellcolor{color1}\centering\arraybackslash Wideband &\cellcolor{color2} \centering\arraybackslash Beam-split &\cellcolor{color1} \centering\arraybackslash SIM &\cellcolor{color2} \centering\arraybackslash SAIM &\cellcolor{color1} \centering\arraybackslash SPIM &  \cellcolor{color2} \centering\arraybackslash ISAC \\
								
								\hline
								\cellcolor{color2} \centering\arraybackslash \cite{spim_BIM_TVT_Ding2018Mar,spim_GBM_Gao2019Jul, spim_onGSM_He2017Sep,spim_bounds_JSTSP_Wang2019May,spim_lowComplexGSM_Shi2021Jan,spim_secureGSM_Yang2020Oct,spim_secureGSM_FiniteAlphabet_Xia2020Dec}
								& \cellcolor{color1} \centering \checkmark 
								& \cellcolor{color2} \centering $\times$ 
								& \cellcolor{color1} \centering $\times$
								& \cellcolor{color2} \centering $\times$
								& \cellcolor{color1} \centering $\times$
								& \cellcolor{color2} \centering $\times$
								& \cellcolor{color1}\centering \checkmark
								& \cellcolor{color2} \centering\arraybackslash  $\times$  \\
								\hline
								\cellcolor{color2} \centering\arraybackslash \cite{hodge2020intelligent,hodge2023index}
								& \cellcolor{color1} \centering \checkmark 
								& \cellcolor{color2} \centering $\times$ 
								& \cellcolor{color1} \centering \checkmark
								& \cellcolor{color2} \centering $\times$
								& \cellcolor{color1} \centering \checkmark
								& \cellcolor{color2} \centering $\times$
								& \cellcolor{color1}\centering  $\times$
								& \cellcolor{color2} \centering\arraybackslash  $\times$  \\
								\hline
								\cellcolor{color2} \centering\arraybackslash \cite{jrc_spim_sm_Ma2021Feb,spim_SM_Clutter_Zhang2021May,spim_SM_Clutter2_Zhang2021May,spim_SAIM_LowComple_Zhu2021Jun}
								& \cellcolor{color1} \centering \checkmark 
								& \cellcolor{color2} \centering $\times$ 
								& \cellcolor{color1} \centering $\times$
								& \cellcolor{color2} \centering $\times$
								& \cellcolor{color1}\centering $\times$
								& \cellcolor{color2} \centering $\times$
								& \cellcolor{color1} \centering $\times$
								& \cellcolor{color2} \centering\arraybackslash  \checkmark  \\
								\hline
								\cellcolor{color2} \centering\arraybackslash \cite{spim_secureSM_SubarraySelection_Shu2020Nov}
								& \cellcolor{color1} \centering \checkmark 
								& \cellcolor{color2} \centering $\times$ 
								& \cellcolor{color1} \centering $\times$
								& \cellcolor{color2} \centering $\times$
								& \cellcolor{color1}\centering $\times$
								& \cellcolor{color2} \centering \checkmark
								& \cellcolor{color1} \centering $\times$
								& \cellcolor{color2} \centering\arraybackslash  $\times$  \\
								\hline
								\cellcolor{color2} \centering\arraybackslash \cite{jrc_generalized_SM_Xu2020Sep}
								& \cellcolor{color1} \centering \checkmark 
								& \cellcolor{color2} \centering $\times$ 
								& \cellcolor{color1} \centering \checkmark
								& \cellcolor{color2} \centering $\times$
								& \cellcolor{color1}\centering $\times$
								& \cellcolor{color2} \centering \checkmark
								& \cellcolor{color1} \centering $\times$
								& \cellcolor{color2} \centering\arraybackslash  \checkmark \\
								\hline
								\cellcolor{color2} \centering\arraybackslash \cite{spim_SAIM_LowComple_Zhu2021Jun}
								& \cellcolor{color1} \centering \checkmark 
								& \cellcolor{color2} \centering $\times$ 
								& \cellcolor{color1} \centering $\times$
								& \cellcolor{color2} \centering $\times$
								& \cellcolor{color1} \centering $\times$
								& \cellcolor{color2} \centering $\times$
								& \cellcolor{color1}\centering \checkmark
								& \cellcolor{color2} \centering\arraybackslash  $\times$  \\
								\hline
								\cellcolor{color2} \centering\arraybackslash \cite{elbir2021JointRadarComm}
								& \cellcolor{color1} \centering $\times$ 
								& \cellcolor{color2} \centering \checkmark 
								& \cellcolor{color1} \centering \checkmark
								& \cellcolor{color2} \centering \checkmark
								& \cellcolor{color1}\centering $\times$ 
								& \cellcolor{color2} \centering $\times$
								& \cellcolor{color1} \centering $\times$
								& \cellcolor{color2} \centering\arraybackslash  \checkmark  \\
								\hline
								\cellcolor{color2} \centering\arraybackslash \cite{elbir_SPIM_MMWAVE_RadarConf_Elbir2022Nov}
								& \cellcolor{color1} \centering \checkmark 
								& \cellcolor{color2} \centering $\times$ 
								& \cellcolor{color1} \centering $\times$
								& \cellcolor{color2} \centering $\times$
								& \cellcolor{color1} \centering $\times$
								& \cellcolor{color2} \centering $\times$
								& \cellcolor{color1}\centering \checkmark
								& \cellcolor{color2} \centering\arraybackslash  \checkmark  \\
								\hline
								\cellcolor{color2} \centering\arraybackslash This paper
								& \cellcolor{color1} \centering \checkmark 
								& \cellcolor{color2} \centering \checkmark
								& \cellcolor{color1} \centering \checkmark
								& \cellcolor{color2} \centering \checkmark
								& \cellcolor{color1} \centering $\times$
								& \cellcolor{color2} \centering $\times$
								& \cellcolor{color1}\centering \checkmark
								& \cellcolor{color2} \centering\arraybackslash  \checkmark \\
								\hline
								\hline
							\end{tabular}
					\end{table*}

		\begin{figure*}
			\centering
			{\includegraphics[draft=false,width=.99\textwidth]{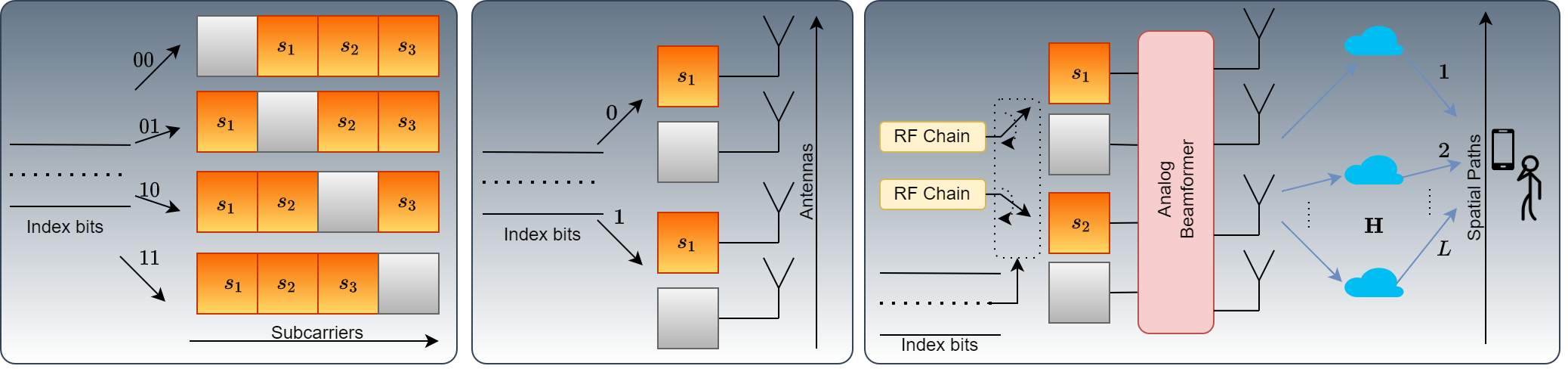} } 
			\caption{IM over subcarriers (left), antennas (middle), and spatial path indices (right).  }
			\label{fig_IM_schemes}
		\end{figure*}

		\subsubsection{SM for ISAC systems}
		Although there are several SM-based studies in the literature for communication-only systems, its usage for ISAC applications is relatively recent.  For SM-aided ISAC systems, \cite{jrc_spim_sm_Ma2021Feb} devised a SAIM approach, wherein the antenna subarrays are allocated between different radar pulses and symbol time slots to handle sensing and communications tasks disjointly without mutual interference.  This was further investigated in~\cite{jrc_generalized_SM_Xu2020Sep}, which employed SM over antenna indices for orthogonal frequency division multiplexing (OFDM) ISAC, for which the OFDM carriers are divided into two groups and assigned exclusively to an active antenna to perform sensing and communications. In a more sensing-centric scenario, \cite{spim_SM_Clutter_Zhang2021May,spim_SM_Clutter2_Zhang2021May} proposed a clutter suppression approach for ISAC based on the similarity of the generated spatial patterns. Also in~\cite{spim_SAIM_LowComple_Zhu2021Jun}, a low-complexity SAIM technique was proposed with one-bit analog-digital converters (ADCs). 
		
		To sum up, the aforementioned ISAC works~\cite{jrc_spim_sm_Ma2021Feb,spim_SM_Clutter_Zhang2021May,spim_SM_Clutter2_Zhang2021May,spim_SAIM_LowComple_Zhu2021Jun,jrc_generalized_SM_Xu2020Sep} consider only SM over antenna or subcarrier indices and do not exploit SPIM (see Table~\ref{tableSummary}).
		On the other hand, the proposed SPIM approaches in~\cite{spim_BIM_TVT_Ding2018Mar,spim_GBM_Gao2019Jul, spim_onGSM_He2017Sep,spim_bounds_JSTSP_Wang2019May} consider the communications-only scenario without accounting for the trade-off between sensing and communications functionalities. 
		
		
	}
	

	
	

	
	\subsection{Our Contributions}
	{Contrary to the aforementioned studies, we leverage SPIM for mmWave and THz ISAC systems to achieve higher spectral efficiency. Preliminary results of our work appeared in our conference publication \cite{elbir_SPIM_MMWAVE_RadarConf_Elbir2022Nov}, where only a single target scenario was investigated for narrowband mmWave system. In this paper, we expand our study to include wideband THz-band systems that are susceptible to beam-split. We also propose novel hybrid beamforming techniques to mitigate the impact of beam-split.  Note that the algorithms proposed here are also applicable to both narrow and wideband mmWave systems.} Our main contributions are:
	\begin{enumerate}[wide]
		\item \textbf{SPIM-ISAC:} Despite the performance loss resulting from beam-split in wideband systems (especially at THz-band), our proposed SPIM-ISAC approach is particularly helpful in improving the SE through the transmission of additional information bits. By exploiting the SPIM, our proposed approach surpasses even fully digital (FD) beamformer design, thereby exhibiting great potential for the next generation of sensing and communications systems. {To this end, we first perform an estimation of ISAC parameters that includes the directions of targets and the user paths. Then, we construct the SPIM analog beamformer by utilizing the steering vectors corresponding to these parameters.   }
		
		{
			\item \textbf{Beam-Split-Aware (BSA) Algorithms:} We introduce efficient approaches for ISAC parameter estimation and hybrid beamforming while simultaneously compensating the impact of beam-split without additional hardware components such as TDs. In particular, for radar target parameters, we propose a BSA \textit{mu}ltiple \textit{si}gnal \textit{c}lassification (MUSIC) algorithm, wherein the DoAs are accurately estimated from the beam-split-corrupted wideband array data. For communications parameters,  we adapt BSA orthogonal matching pursuit (OMP) from \cite{elbir_BSA_OMP_THZ_CE_Elbir2023Feb} for the SPIM scenario. Our BSA-MUSIC and BSA-OMP algorithms take into account the angular deviation from beam-split, thereby, \textit{ipso facto} mitigating the effect of beam-split. Unlike prior works that rely on TD networks, our approach does not require additional hardware while still yielding satisfactory performance. {While beam-split has recently been investigated in mmWave and THz ISAC scenarios~\cite{elbir2021JointRadarComm,beamSquintAwareHB_SD_You2022Aug,Gao2023BSISAC,elbir_ISAC_antennaSelection_THZ_Elbir2023Jul}, IM-based transceiver architecture is not considered. }
			
		}
		
		\item \textbf{Analog-Only and Hybrid Beamformers:} 
		We also propose three different beamformers: SD-analog-only (AO), SI-AO, and hybrid (analog/digital). 
		While the SD-AO beamformer accurately compensates for the beam-split, its hardware employs SD phase shifter networks and is more complex. The SI-AO beamformer yields a simpler architecture but at the cost of lower SE. {Finally, for hybrid beamformer, we design an updated \textit{BSA baseband beamformer} that compensates for the beam-split in the baseband. Our proposed SPIM-ISAC hybrid beamformer achieves higher SE than the AO beamformers yet exhibit lower hardware complexity.} The SPIM-ISAC beamformer simultaneously maximizes the SE at the communications user over SPIM-aided signaling and achieves as much signal-to-noise ratio (SNR) as possible for detecting the radar targets. The SPIM-ISAC analog beamformer comprises radar-only and communications-only beamformers. While the former is constructed from the steering vectors corresponding to the target DoAs, the latter is selected from different spatial patterns between the BS and the communications user. The proposed design also includes a trade-off parameter between communications and radar sensing operations in the sense that the SNR at the targets and the user is controlled. 
	\end{enumerate}

	
	{
		\subsection{Paper Outline}
		The rest of the paper is organized as follows. In the next section, we describe the signal and system model for SPIM-ISAC. In Section~\ref{sec:ProbFormulation}, we formulate the hybrid beamforming problem for SPIM-ISAC. In Section~\ref{sec:ParameterEst}, we develop techniques to estimate radar and communications channel parameters. Next, we introduce our SPIM-ISAC framework as well as the proposed beamforming algorithms in Section~\ref{sec:SPIMISAC}. We validate our model and methods through numerical experiments in Section~\ref{sec:Sim} and conclude in Section~\ref{sec:Conc}. 
	}
	\subsection{Notation}
	Throughout this paper,  $(\cdot)^\textsf{T}$ and $(\cdot)^{\textsf{H}}$ denote the transpose and conjugate transpose operations, respectively. For a matrix $\mathbf{A}$ and vector $\mathbf{a}$; $[\mathbf{A}]_{ij}$, $[\mathbf{A}]_k$  and $[\mathbf{a}]_l$ correspond to the $(i,j)$-th entry, $k$-th column and $l$-th entry, respectively. $\lfloor\cdot \rfloor$ and $\mathbb{E}\{\cdot\}$ represent the flooring and expectation operations, respectively. The binomial coefficient is defined as $\footnotesize \left(\begin{array}{c}
		n\\
		k
	\end{array}  \right) = \frac{ n!}{k! (n-k)!}$. An $N\times N$ identity matrix is represented by $\mathbf{I}_{N} $. The pulse-shaping function is represented by {$\mathrm{sinc}(t) = \frac{\sin t}{t}$,} $\xi (a) = \frac{\sin N \pi a}{N \sin \pi a }$ is the Dirichlet sinc function, and $\lceil \cdot \rceil$ denotes the ceiling operation. We denote $|| \cdot||_2$ and $|| \cdot||_\mathcal{F}$ as the  $l_2$-norm and Frobenious norm, respectively.

	\section{System Model}
	\label{sec:Model}
	\begin{figure}[t]
		\centering		{\includegraphics[draft=false,width=.99\columnwidth]{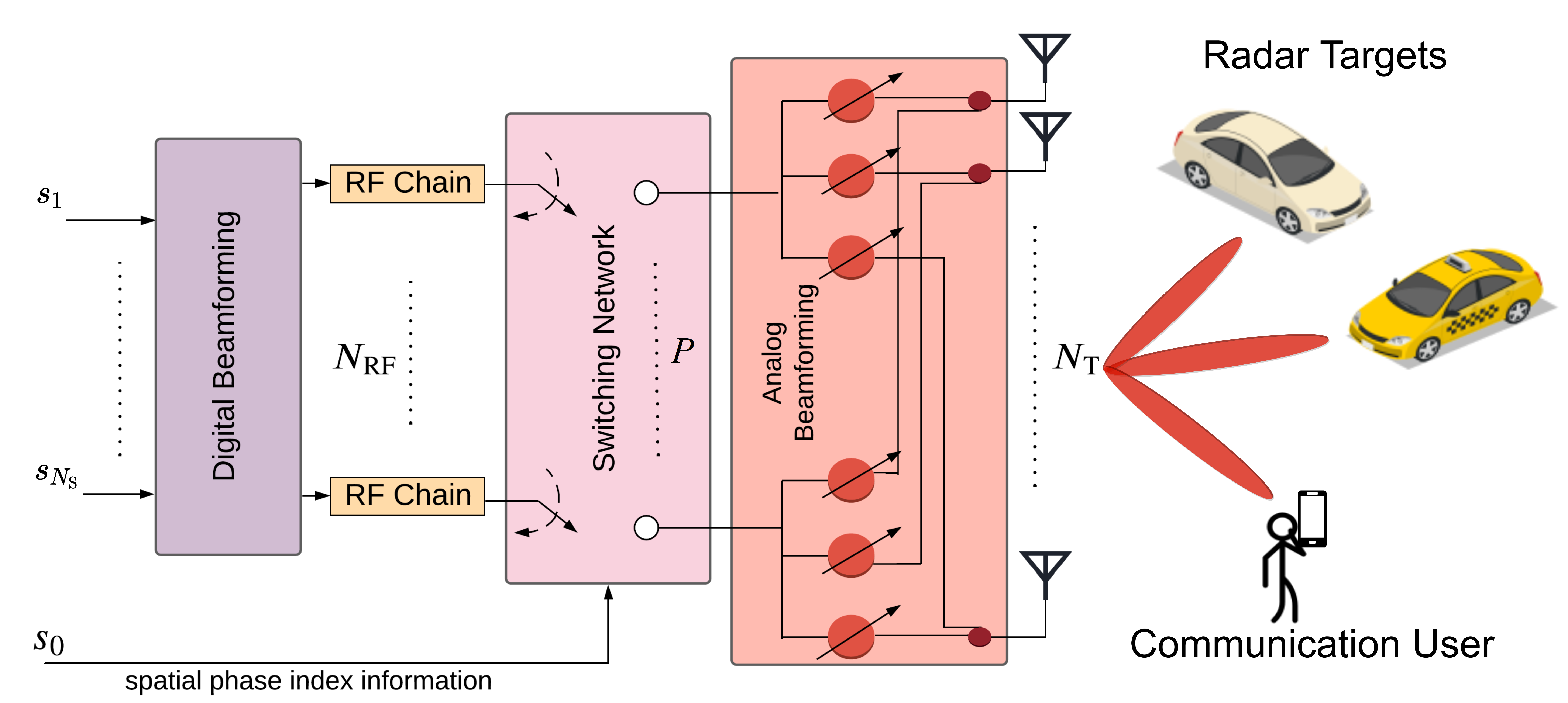} }
		\caption{The SPIM-ISAC architecture processes the incoming data streams and employs spatial path index information $s_0$ in a switching network, which connects $N_\mathrm{RF}$ RF chains to $P= L + K$ taps on the analog beamformers to exploit $K$ paths for the radar targets and $L_\mathrm{S}$ out of $L$ spatial paths for the communications user. 
		}
		\label{fig_BS}
	\end{figure}
	Consider a wideband transmitter design problem in an ISAC scenario with SPIM. The scenario involves {a single user} and $K$ radar targets (Fig.~\ref{fig_BS}). The dual-function BS  employs $M$ subcarriers and it  has $N_\mathrm{T}$ antennas to jointly communicate with the communication user and sense the radar targets via probing signals. The user has $N_\mathrm{R}$ antennas, for which $N_\mathrm{S}$ data symbols $\mathbf{s}[m] = [s_1[m],\cdots,s_{N_\mathrm{S}}[m]]^\textsf{T}\in \mathbb{C}^{N_\mathrm{S}}$ are transmitted, where $\mathbb{E}\{\mathbf{s}[m]\mathbf{s}^\textsf{H}[m]\}=\frac{1}{N_\mathrm{S}}\mathbf{I}_{N_\mathrm{S}}$.  Additionally, the spatial path index information represented by ${s}_0$ is fed to the switching network (Fig.~\ref{fig_BS}) to  assign the outputs of $N_\mathrm{RF}$ RF chains to the $ P$ taps of the analog beamformer. {Here, $P$ is a priori parameter and it is defined as the total number of \textit{spatial paths} resolved at the BS from the targets and communications user as $	P = L + K,$}
	where $L$ and $K$ paths are reserved for communications and radar operations, respectively.	We also define $L_\mathrm{S}$ as the selected number of paths out of $L$ total communication paths for SPIM.	Thus, we have $	N_\mathrm{RF} = L_\mathrm{S} + K,$
	which indicates that $L_\mathrm{S}$ columns of the analog beamformer are dedicated to communications task while the remaining $K$ columns are employed for sensing.
	As a result of this \textit{information-driven random switching} with IM~\cite{indexMod_Survey_Mao2018Jul}, there exist $\footnotesize \left(\begin{array}{c}
		L \\
		Ls
	\end{array}  \right)$ choices of connection to incorporate the spatial domain information as a principle of IM.  Thus, the BS can process at most $P\geq N_\mathrm{RF}$ inputs, and each of the $P$ inputs is connected to the $N_\mathrm{T}> P$ BS antennas via phase shifters forming a fully-connected structure. The switching operation between the RF chains and the phase shifters needs to be performed in accordance with the symbol duration, for which low-cost switches with the speed of nanoseconds are available~\cite{spim_GBMM_Guo2019Jul,heath2016overview}.
	
	Thus, compared to the conventional ISAC systems, the proposed SPIM-ISAC architecture has the advantage of transmitting additional data streams toward the communication user by exploiting the \textit{spatial pattern} of the channel with limited RF chains, i.e., $N_\mathrm{RF}\leq P$ while performing radar sensing task with $K$ line-of-sight (LoS) spatial paths. Note also that the communication user requires  $\bar{N}_\mathrm{RF} \geq L_\mathrm{S}$ RF chains in order to perform SPIM for processing $L_\mathrm{S}$ paths.  Then, we define the  total number of spatial patterns for communication as $	S = 2^{\left\lfloor \log_2 \left(\footnotesize\begin{array}{c}
			L \\
			L_\mathrm{S}
		\end{array}  \right) \right \rfloor  }.$

	\begin{remark}     
		In the limiting cases of $K=0$ (communications-only) and $L=0$ (sensing-only), the proposed ISAC approach remains viable regardless of the received paths from the user and targets. This is managed by adjusting the sensing-communications trade-off parameter. Further, when $L_\mathrm{S} = L$, the proposed SPIM-ISAC configuration reduces to a conventional ISAC system, regardless of  $K$, because there is only one choice of connection of transmission~\cite{spim_bounds_JSTSP_Wang2019May,spim_GBMM_Guo2019Jul}.\end{remark} 
	
	{
		\begin{remark}
			Since  $P$ is an environment-dependent parameter, the proposed switching network architecture at the BS as shown in Fig.~\ref{fig_BS} implies that SPIM can be performed for at most $P$ spatial paths.  Suppose there are $\bar{P}$ paths in the environment, then, $P$ is selected as
			\begin{align}
				P = \left\{ \begin{array}{cc}
					{P} &   \text{if } P \leq \bar{P}  \\
					\bar{P}&  \text{otherwise}.
				\end{array} \right.   
			\end{align}
		\end{remark} 
	}			

	\subsection{Communications Model}
	Consider the BS that aims to transmit the data symbol vector $\mathbf{s}[m]\in \mathbb{C}^{N_\mathrm{S}}$ toward the communications user. The BS first applies the SD baseband beamformer $\mathbf{F}_\mathrm{BB}^{(i)}[m]\in\mathbb{C}^{N_\mathrm{RF}\times N_\mathrm{S}}$ ($N_\mathrm{S} = L_\mathrm{S}$) for the $i$-th spatial pattern. Then,  $M$-point inverse fast Fourier transform (IFFT) is applied to convert the signal to time-domain and then appended with the cyclic prefix (CP). Finally, the SI analog beamformer ${\mathbf{F}}_\mathrm{RF}^{(i)} \in \mathbb{C}^{N_\mathrm{T}\times N_\mathrm{RF}}$ is applied. Denoting the index set of possible spatial patterns by $\mathcal{S} = \{1,\cdots, S\}$, the $N_\mathrm{T}\times 1$ transmit signal for the $i$-th,  ($i\in \mathcal{S}$), the spatial pattern becomes $	\mathbf{x}^{(i)}[m] = \mathbf{F}_\mathrm{RF}^{(i)}\mathbf{F}_\mathrm{BB}^{(i)}[m]\mathbf{s}[m],$
	where the analog beamformer $\mathbf{F}_\mathrm{RF}^{(i)}$ has constant-modulus constraint, i.e., $|[\mathbf{F}_\mathrm{RF}^{(i)}]_{n,r}| = 1/\sqrt{N_\mathrm{T}}$ for $n = 1,\cdots, N_\mathrm{T}$, $r = 1,\cdots,N_\mathrm{RF}$. Further, we have $\sum_{m=1}^{M}\| \mathbf{F}_\mathrm{RF}^{(i)}\mathbf{F}_\mathrm{BB}^{(i)}[m]\|_\mathcal{F}^2 = MN_\mathrm{S}$ to account for the total power constraint.


	\subsubsection{Channel} We employ Saleh-Valenzuela (S-V) multipath channel model, which is the superposition of received non-LoS (NLoS) paths to model both mmWave and THz channels~\cite{ummimoTareqOverview,ummimoHBThzSVModel,alkhateeb2016frequencySelective,heath2016overview}.	Compared to the mmWave channel, the THz channel involves limited reflected paths and negligible scattering~\cite{ummimoTareqOverview,thz_mmWave_path_Comparison_Yan2020Jun}. For example, approximately $5$ paths survive at $0.3$ THz for THz massive MIMO systems as compared to approximately $8$ paths at $60$ GHz~\cite{thz_mmWave_path_Comparison_Yan2020Jun}. Especially for outdoor applications, multipath channel models are widely used to represent the THz channel for a more general scenario~\cite{ummimoTareqOverview,ummimoHBThzSVModel,thz_mmWave_path_Comparison_Yan2020Jun}.  
	Consider the delay-$\bar{d}$ $N_\mathrm{R}\times N_\mathrm{T}$ MIMO communications channel involving $L$ NLoS paths in discrete-time domain as
	\begin{align}
		\label{channelTimeDomain}
		\tilde{\mathbf{H}}(\bar{d}) = \sum_{l = 1}^{L} {\gamma}_l \mathrm{sinc}(\bar{d} - B\tau_l) \mathbf{a}_\mathrm{R}(\phi_l) \mathbf{a}_\mathrm{T}^\textsf{H}(\theta_l), 
	\end{align}
	where ${\gamma}_l\in \mathbb{C}$ denotes the channel path gain, $B$ represents the system bandwidth and $\tau_l$ is the time delay of the $l$-th path.  $\phi_l$ and $\theta_l$ denote the physical DoA and direction-of-departure (DoD) angles of the scattering paths between the user and the BS, respectively, where  $\phi_l = \sin \tilde{\phi}_l$, $\theta_l = \sin \tilde{\theta}_l$ and $\tilde{\phi}_l,\tilde{\theta}_l \in [-\frac{\pi}{2},\frac{\pi}{2}]$. Then, the corresponding receive and transmit steering vectors are defined as $\mathbf{a}_\mathrm{R}(\phi_l)\in \mathbb{C}^{N_\mathrm{R}}$ and $\mathbf{a}_\mathrm{T}(\theta_l)\in \mathbb{C}^{N_\mathrm{T}}$, respectively. Performing $M$-point FFT of the delay-$\bar{d}$ channel given in (\ref{channelTimeDomain}) yields $\mathbf{H}[m] = \sum_{\bar{d}=1}^{\bar{D}-1} \tilde{\mathbf{H}}(\bar{d}) e^{- \mathrm{j}\frac{2\pi m}{M} \bar{d} }  $, where $\bar{D}\leq M$ is the CP length. Then, the $N_\mathrm{R}\times N_\mathrm{T}$ channel matrix in frequency domain is 
	\begin{align}
		\label{channelFrequencyDomain}
		\mathbf{H}[m] = \sum_{l = 1}^{L} {\gamma}_l \mathbf{a}_\mathrm{R}(\varphi_{l,m}) \mathbf{a}_\mathrm{T}^\textsf{H}(\vartheta_{l,m})e^{-\mathrm{j}2\pi \tau_l f_m},
	\end{align}
	where $\varphi_{l,m}$ and $\vartheta_{l,m}$ denote the spatial directions, which are SD and they are deviated from the physical directions $\phi_l$, $\theta_l$ in the beamspace because of the beam-split. On the other hand, the beam-split-free channel matrix is $\overline{\mathbf{H}}[m] = \sum_{l = 1}^{L} {\gamma}_l \mathbf{a}_\mathrm{R}(\phi_{l}) \mathbf{a}_\mathrm{T}^\textsf{H}(\theta_{l})e^{-\mathrm{j}2\pi \tau_l f_m}$.
	\subsubsection{Beam-split}
	\label{sec:beamSplit} 
	
	In wideband transmission, a single wavelength assumption, i.e., $\lambda_1 = \cdots =\lambda_M = \frac{c_0}{f_c}$, is usually made across the subcarriers, where $c_0$ and $f_c$ are the speed of light and carrier frequency, respectively. However, when a common analog beamformer is used, the single wavelength assumption does not hold; the generated beams 
	split and spatially point to different directions \cite{beamSquint_FeiFei_Wang2019Oct,elbir_thz_jrc_Magazine_Elbir2022Aug}. Suppose that similar beamforming architecture (i.e., SI analog beamformer with SD digital beamformers) is employed by the user. Then, the DoA angles at the user are also affected by beam-split.   The relationship between the spatial  ($\varphi_{l,m},\vartheta_{l,m}$) and the physical directions ($\phi_l, \theta_l$)  is given as
	\begin{align}
		\varphi_{l,m} =\eta_m \phi_l, \hspace{10pt}
		\vartheta_{l,m} = \eta_m\theta_l, \label{beamSplitforAngles}
	\end{align}	
	where $\eta_m = \frac{f_m}{f_c}$, $f_m = f_c + \frac{B}{M}(m - 1 - \frac{M-1}{2})$ is the $m$-th subcarrier frequency for the system bandwidth $B$.    
	
	The beam-split is mitigated if the spatial $\varphi_{l,m},\vartheta_{l,m}$ and physical directions $\phi_l, \theta_l$ are equal, i.e., $\eta_m = 1$. This is accomplished via any of the following methods:
	\begin{enumerate}
		\item Narrowband scenario, wherein $f_m \approx f_c$ and the carrier frequency is much larger than the system bandwidth (i.e. $f_c \gg B$).
		\item Using additional hardware components, e.g., TDs, (each of which consumes approximately $100$ mW~\cite{elbir_thz_jrc_Magazine_Elbir2022Aug,delayPhasePrecoding_THz_Dai2022Mar}) between the phase shifters and the RF chains~\cite{trueTimeDelayBeamSquint,delayPhasePrecoding_THz_Dai2022Mar} to compensate for the angular deviation in the generated beams due to beam-split via generating virtual SD analog beamformers.
		\item Designing SD analog beamformers (see Sec.~\ref{sec:AOBF}), which can alleviate the effect of beam-split at the cost of employing $MN_\mathrm{T}N_\mathrm{RF}$ (instead of $N_\mathrm{T}N_\mathrm{RF}$) phase-shifters (each of which consumes approximately $20$ mW at $60$ GHz ($40$ mW at $0.3$ THz)~\cite{elbir_thz_jrc_Magazine_Elbir2022Aug}). 
		\item Advanced signal processing techniques to compensate for the beam-split by correcting the deviated phase terms of the analog beamformers (see Sec.~\ref{sec:BeamSplitMitigation}).
	\end{enumerate}

	Consider  a uniform linear array (ULA) configuration with $d = \frac{\lambda_c}{2}  = \frac{c_0}{2f_c}$  half-wavelength element spacing. The $n$-th element of the beam-split-free transmit steering vector is $[\mathbf{a}_\mathrm{T}(\theta_l)]_n = \frac{1}{N_\mathrm{T}} \exp \{- \mathrm{j} \pi (n-1)\theta_l  \}$. However, under the effect of beam-split, the $n$-th entry of the SD steering vector $\mathbf{a}_\mathrm{T}(\vartheta_{l,m})$ is 
	\begin{align}
		[\mathbf{a}_\mathrm{T}(\vartheta_{l,m})]_n &= \frac{1}{\sqrt{N_\mathrm{T}}} \exp \left\{- \mathrm{j} \frac{2\pi d}{\lambda_m } (n-1) \theta_l\right \} 
		\nonumber\\	& 
		=\frac{1}{\sqrt{N_\mathrm{T}}} \exp\left\{- \mathrm{j}\pi  \frac{ f_m}{f_c }(n-1) \theta_l \right\} \nonumber \\
		&=\frac{1}{\sqrt{N_\mathrm{T}}} \exp\left\{- \mathrm{j}\pi (n-1)\eta_m \theta_l \right\},\label{steringVec_aT}
	\end{align}
	where    $\lambda_m = \frac{c_0}{f_m}$ is the wavelength of the $m$-th subcarrier. Note that $\eta_m = 1$ in (\ref{steringVec_aT}) implies zero beam-split. The  channel model in (\ref{channelFrequencyDomain}) in a compact form is $	\mathbf{H}[m] = \mathbf{P}_m\boldsymbol{\Lambda}_m\mathbf{Q}_m^\textsf{H},$
	where the matrices $\mathbf{P}_m \in \mathbb{C}^{N_\mathrm{R}\times L}$ and $\mathbf{Q}_m\in \mathbb{C}^{N_\mathrm{T}\times L}$ represent the receive and transmit array responses for $L$ paths as $\mathbf{P}_m = [\mathbf{a}_\mathrm{R}(\varphi_{1,m}),\cdots, \mathbf{a}_\mathrm{R}(\varphi_{L,m})]$ and $\mathbf{Q}_m = [\mathbf{a}_\mathrm{T}(\vartheta_{1,m}),\cdots,\mathbf{a}_\mathrm{T}(\vartheta_{L,m})]$, respectively; $\boldsymbol{\Lambda}_m\in \mathbb{C}^{L \times L}$ is a diagonal matrix comprised of path gains ${\gamma}_l$ as $	\boldsymbol{\Lambda}_m = \mathrm{diag}\{\tilde{\gamma}_{1},\cdots, \tilde{\gamma}_L\}$,	where $\tilde{\gamma}_l = {\gamma}_le^{- \mathrm{j}2\pi \tau_l f_m}$ and ${\gamma}_{1} > {\gamma}_{2} > \cdots > {\gamma}_L$. Then, the $N_\mathrm{R}\times 1$ received signal at the communications user for the  $i$-th spatial pattern is $	\mathbf{H}[m] = \mathbf{P}_m\boldsymbol{\Lambda}_m\mathbf{Q}_m^\textsf{H},$
	where $\mathbf{n}[m]\sim \mathcal{CN}(\mathbf{0},\sigma_n^2\mathbf{I}_{N_\mathrm{R}})\in\mathbb{C}^{N_\mathrm{R}}$ represents the temporarily and spatially  additive white Gaussian noise vector.

	\vspace{-12pt}
	\subsection{Radar Model}
	The aim of the radar sensing task is to achieve the highest SNR toward target directions. The beampattern of the radar for $\Phi \in [-\frac{\pi}{2},\frac{\pi}{2}]$ is 
	\begin{align}
		B^{(i)}_m(\Phi) = \mathbf{a}_\mathrm{T}(\Phi)^\textsf{H}(\Phi)\mathbf{R}_\mathbf{x}^{(i)}[m] \mathbf{a}_\mathrm{T}(\Phi)  ,\label{beamPattern}
	\end{align}
	where $\mathbf{a}_\mathrm{T}(\Phi)\in\mathbb{C}^{N_\mathrm{T}}$ denotes the steering vector corresponding arbitrary target direction $\Phi$, and  $\mathbf{R}_\mathbf{x}^{(i)}[m]\in \mathbb{C}^{N_\mathrm{T}\times N_\mathrm{T}}$ is the covariance of the transmit signal. For the $i$-th spatial pattern, we have 
	\begin{align}
		\mathbf{R}_\mathbf{x}^{(i)}[m] &= \mathbb{E}\{ \mathbf{x}^{(i)}[m]\mathbf{x}^{(i)^\textsf{H}}[m] \}
		\nonumber\\	&
		=\mathbb{E}\{ \mathbf{F}_\mathrm{RF}^{(i)}\mathbf{F}_\mathrm{BB}^{(i)}[m]\mathbf{s}[m] \mathbf{s}^\textsf{H}[m] \mathbf{F}_\mathrm{BB}^{(i)^\textsf{H}}[m]\mathbf{F}_\mathrm{RF}^{(i)^\textsf{H}} \} \nonumber \\
		& 
		= \frac{1}{N_\mathrm{S}}\mathbf{F}_\mathrm{RF}^{(i)}\mathbf{F}_\mathrm{BB}^{(i)}[m] \mathbf{F}_\mathrm{BB}^{(i)^\textsf{H}}[m]\mathbf{F}_\mathrm{RF}^{(i)^\textsf{H}}.
	\end{align}
	{Then, we formulate the radar beampattern design problem as~\cite{beampatternDesign_Liu2018Feb,jrc_TCOM_Liu2020Feb}
		\begin{align}
			&\minimize_{\mathbf{F}_\mathrm{RF}^{(i)}, \{\mathbf{F}_\mathrm{BB}^{(i)}[m]\}_{m = 1}^M }  \sum_{m = 1}^M \sum_{\tilde{k}}^{\tilde{K} }  |\overline{ B}_m^{(i)}(\Phi_{\tilde{k}})  - B_m^{(i)}(\Phi_{\tilde{k}})    | \nonumber \\
			& \subjectto [\mathbf{R}_\mathbf{x}^{(i)}[m]]_{n,n} = 1/N_\mathrm{T}, \nonumber\\
			&\hspace{50pt} \mathbf{R}_\mathbf{x}^{(i)}[m] \succeq \mathbf{0}, \mathbf{R}_\mathbf{x}^{(i)}[m] = \mathbf{R}_\mathbf{x}^{(i)^\textsf{H}}[m],
		\end{align} 
		where $\overline{ B}_m^{(i)}(\Phi_{\tilde{k}})$ denotes the desired beampattern gain at direction $\Phi_{\tilde{k}}$.} 	To simultaneously obtain the desired beampattern for the radar target and achieve satisfactory communications performance, the hybrid beamformer $\mathbf{F}_\mathrm{RF}^{(i)}\mathbf{F}_\mathrm{BB}^{(i)}[m]$ should be designed accordingly.

	
	\section{Problem Formulation}
	\label{sec:ProbFormulation}
	While designing SPIM-ISAC hybrid beamformers, our goal is to maximize the {SE, which is characterized by mutual information (MI)~\cite{spim_GBMM_Guo2019Jul,spim_BIM_TVT_Ding2018Mar}.} In the sequel, we introduce the SE for both SPIM-assisted and conventional systems.
	\subsection{SE of the SPIM-ISAC System}
	Define $\mathrm{SE}(\mathbf{y}^{(i)}[m]; \mathbf{x}^{(i)}[m], \mathbf{F}^{(i)}[m])$ as the {SE} of the overall  wireless transmission for the received and transmitted signals $\mathbf{y}^{(i)}[m]$ and $\mathbf{x}^{(i)}[m]$ at the $i$-th spatial pattern with the hybrid beamformer $\mathbf{F}^{(i)}[m] = \mathbf{F}_\mathrm{RF}^{(i)}\mathbf{F}_\mathrm{BB}^{(i)}[m]$. Then, $\mathrm{SE}(\mathbf{y}^{(i)}[m]; \mathbf{x}^{(i)}[m], \mathbf{F}^{(i)}[m])$ is 
	\begin{align}
		\label{seOverall}
		&\mathrm{SE}(\mathbf{y}^{(i)}[m]; \mathbf{x}^{(i)}[m], \mathbf{F}^{(i)}[m]) = \mathrm{SE}(\mathbf{y}^{(i)}[m]; \mathbf{x}^{(i)}[m]|\mathbf{F}^{(i)}[m]  ) 
		\nonumber \\
		&\hspace{70pt}
		+ \mathrm{SE}(\mathbf{y}^{(i)}[m]; \mathbf{F}^{(i)}[m]), \hspace{10pt}m\in \mathcal{M},
	\end{align}
	where $\mathrm{SE}(\mathbf{y}^{(i)}[m]; \mathbf{x}^{(i)}[m]|\mathbf{F}^{(i)}[m]  )$ and $\mathrm{SE}(\mathbf{y}^{(i)}[m]; \mathbf{F}^{(i)}[m])$ stand for the SE corresponding to conventional symbol transmission and the SE achieved by employing SPIM, respectively. In particular, $\mathrm{SE}(\mathbf{y}^{(i)}[m]; \mathbf{x}^{(i)}[m]|\mathbf{F}^{(i)}[m]  )$ is well-known~\cite{heath2016overview} as
	\begin{align}
		&\mathrm{SE}(\mathbf{y}^{(i)}[m]; \mathbf{x}^{(i)}[m]|\mathbf{F}^{(i)}[m]  )\hspace{-3pt} = \hspace{-3pt} \frac{1}{S} \sum_{i = 1}^{S}\log_2 \mathrm{det}\{\boldsymbol{\Sigma}_{i}[m] \}, \label{MI_mmwave_only}
	\end{align}
	where $\boldsymbol{\Sigma}_{i}[m] = \mathbf{I}_{N_\mathrm{R}} +  \frac{1}{\sigma_n^2N_\mathrm{S}}\mathbf{H}[m]\mathbf{F}^{(i)}[m] {\mathbf{F}^{(i)}}^\textsf{H}[m]\mathbf{H}^\textsf{H}[m].$
	
	While there is no closed-form expression for $\mathrm{SE}(\mathbf{y}^{(i)}[m]; \mathbf{F}^{(i)}[m])$, it is lower-bounded by $\mathrm{SE}_\mathrm{LB}(\mathbf{y}^{(i)}[m]; \mathbf{F}^{(i)}[m])$~\cite{spim_bounds_JSTSP_Wang2019May}, which is defined as
	\begin{align}
		&\mathrm{SE}_\mathrm{LB}(\mathbf{y}^{(i)}[m]; \mathbf{F}^{(i)}[m]) = \log_2 S - N_\mathrm{R}\log_2 e
		\nonumber \\
		&\hspace{30pt} 
		- \frac{1}{S} \sum_{i = 1}^S \log_2  \left(\sum_{j = 1}^S \frac{\mathrm{det}\{ \boldsymbol{\Sigma}_i[m]\}}{ \mathrm{det}\{\boldsymbol{\Sigma}_{i}[m] + \boldsymbol{\Sigma}_{j}  [m]\} }\right), \label{MI_SPIM_only}
	\end{align} 
	{which has been shown to be a tight approximation of $\mathrm{SE}(\mathbf{y}^{(i)}[m]; \mathbf{F}^{(i)}[m])$~\cite{spim_GBMM_Guo2019Jul,spim_GBM_Gao2019Jul}. By combining $\mathrm{SE}(\mathbf{y}^{(i)}[m]; \mathbf{x}^{(i)}[m]|\mathbf{F}^{(i)}[m]  )$ in (\ref{MI_mmwave_only}) and $\mathrm{SE}_\mathrm{LB}(\mathbf{y}^{(i)}[m]; \mathbf{F}^{(i)}[m])$ in (\ref{MI_SPIM_only}),}  we obtain the SE of the SPIM-aided system as  			 
	\begin{align}
		\label{MI_SPIM}
		&\mathrm{SE}_\mathrm{SPIM}[m]  = \log_2 \left( \frac{S}{(2\sigma_n^2)^{N_\mathrm{R}}}\right)
		\nonumber\\
		&
		\hspace{20pt}
		- \frac{1}{S} \sum_{i = 1}^S \log_2  \left(\sum_{j = 1}^S \mathrm{det}\{\boldsymbol{\Sigma}_{i}[m] + \boldsymbol{\Sigma}_{j}[m]  \}^{-1}\right).
	\end{align}
	\subsection{SE of the MIMO-ISAC System}			
	In conventional MIMO  systems, the analog beamformer ${\mathbf{F}}_\mathrm{RF}$ relies on the selection of the strongest path for hybrid beamformer design~\cite{spim_GBMM_Guo2019Jul,spim_bounds_JSTSP_Wang2019May}. The SE expression is also the same for MIMO-ISAC. As an example, we have  $\mathbf{F}_\mathrm{RF}^{(1)} =[ \mathbf{F}_\mathrm{R},\mathbf{a}_\mathrm{T}(\theta_1)]$, where $\mathbf{F}_\mathrm{R}\in \mathbb{C}^{N_\mathrm{T}\times K}$ is the radar-only beamformer, and $\mathbf{a}_\mathrm{T}(\theta_1)$ corresponds to the strongest communications path with path gain ${\gamma}_{1}$. Since there is only one choice of transmission, i.e., $S=1$, the SE for the MIMO-ISAC system becomes
	\begin{align}
		&\mathrm{SE}_\mathrm{MIMO}[m]  =\log_2 \bigg(\mathrm{det}\bigg\{\mathbf{I}_{N_\mathrm{R}}
		\nonumber \\
		&\hspace{0pt}
		+ \frac{1}{\sigma_n^2N_\mathrm{S}}\mathbf{H}[m]{\mathbf{F}}_\mathrm{RF}^{(1)}{\mathbf{F}}_\mathrm{BB}^{(1)}[m]{\mathbf{F}_\mathrm{BB}^{(1)}}^\textsf{H}[m]{\mathbf{F}_\mathrm{RF}^{(1)}}^\textsf{H}\mathbf{H}^\textsf{H} [m]  \bigg \}  \bigg), \label{MI_mmwave1}
	\end{align}
	where $i=1$ denotes the first spatial pattern which, in this case, corresponds to the path with strongest gain. 	
	
	{
		The SE expressions in (\ref{MI_SPIM})  and  (\ref{MI_mmwave1}) are identical when there is only one spatial pattern (i.e., $S=1$). Furthermore, the SE in this scenario is maximized by the unconstrained beamformer $\mathbf{F}_\mathrm{opt}[m]\in \mathbb{C}^{N_\mathrm{T}\times N_\mathrm{S}}$, which results from the singular value decomposition (SVD) of $\mathbf{H}[m]$. Nevertheless, higher SE is achieved by employing multiple spatial patterns in (\ref{MI_SPIM}).
	}
	\subsection{Hybrid Beamformer Design}
	The SPIM-ISAC hybrid beamformer design problem is 
	\begin{subequations}
		\label{prob1}
		\begin{align}
			\maximize_{\mathbf{F}_\mathrm{RF}^{(i)}, \{\mathbf{F}_\mathrm{BB}^{(i)}[m]\}_{m = 1}^M} & \sum_{m = 1}^{M}\mathrm{SE}_\mathrm{SPIM}[m] \nonumber \\
			\subjectto \hspace{10pt}& \mathbf{F}_\mathrm{RF}^{(i)} \in  \mathcal{A},  \label{prob1_a}\\
			&|[\mathbf{F}_\mathrm{RF}^{(i)}]_{n,r}| = 1/\sqrt{N_\mathrm{T}},\label{prob1_b}\\
			& \sum_{m=1}^{M} \|\mathbf{F}_\mathrm{RF}^{(i)}\mathbf{F}_\mathrm{BB}^{(i)}[m]\|_\mathcal{F} = MN_\mathrm{S} ,\label{prob1_c} \\
			& {\sum_{m=1}^{M} \sum_{\tilde{k}}^{\tilde{K} }  |\overline{ B}_m^{(i)}(\Phi_{\tilde{k}})  - B_m^{(i)}(\Phi_{\tilde{k}})    | | \leq \rho ,  } \label{prob1_d}
		\end{align}
	\end{subequations}
	{where $\rho$ in (\ref{prob1_d}) is the sensing accuracy tolerance for beampattern design, and $\mathcal{A}= \{\mathbf{F}_\mathrm{RF}^{(1)},\cdots, \mathbf{F}_\mathrm{RF}^{(S)} \}$ in (\ref{prob1_a}) represents the set of possible analog beamformers for the SPIM.  Note that \eqref{prob1} also includes constraints for the constant-modulus property of $\mathbf{F}_\mathrm{RF}^{(i)}$  and the total power constraint as in (\ref{prob1_b}) and (\ref{prob1_c}), respectively.}
	
	The optimization problem in \eqref{prob1} belongs to the class of mixed-integer non-convex programming (MINCP)~\cite{mixedInteger_programming_Burer2012Jul}. It is computationally prohibitive because of the combinatorial subproblems for each spatial pattern $i$. It is also non-linear because of multiple unknowns $\mathbf{F}_\mathrm{RF}^{(i)}$ and $ \mathbf{F}_\mathrm{BB}^{(i)}[m]$. In order to provide an effective beamforming solution, we exploit the steering vectors corresponding to the radar and communications paths to design the beamformers for SPIM-ISAC. 

	{\section{ISAC Parameter Estimation}
		\label{sec:ParameterEst}
		
	}

	{In communications-only systems, the cost function in (\ref{prob1}) is maximized by designing the analog beamformer $\mathbf{F}_\mathrm{RF}^{(i)}$ from
		the steering vectors corresponding to the directions of the communications paths~\cite{spim_bounds_JSTSP_Wang2019May,spim_onGSM_He2017Sep}. In ISAC scenario, we utilize the radar and communications parameters, e.g., the directions of the radar targets and the communications paths. As a result, the proposed design ensures both communications and radar performance.} In particular, the analog beamformer is constructed from the steering vectors corresponding to the directions of the radar targets and communications user paths. Define the analog beamformer $\mathbf{F}_\mathrm{RF}^{(i)}\in \mathbb{C}^{N_\mathrm{T}\times N_\mathrm{RF}}$ as
	\begin{align}
		\label{Frf}
		\mathbf{F}_\mathrm{RF}^{(i)} = \left[\mathbf{F}_\mathrm{R} \;| \; \mathbf{F}_\mathrm{C}^{(i)} \right],
	\end{align}
	where $\mathbf{F}_\mathrm{R} =  [\mathbf{a}_\mathrm{T}(\Phi_1),\cdots, \mathbf{a}_\mathrm{T}(\Phi_K)]\in \mathbb{C}^{N_\mathrm{T}\times K}$ denotes the radar-only beamformer, which includes the steering vectors corresponding to the target directions. Also,  $\mathbf{F}_\mathrm{C}^{(i)}\in \mathbb{C}^{N_\mathrm{T}\times N_\mathrm{S}}$ is the communications-only analog beamformer comprised of the steering vectors corresponding to the communications paths for the $i$-th spatial pattern.
	
	{
		\begin{remark}
			We remark that the SPIM is only performed over the communications-only analog beamformers (i.e., $\mathbf{F}_\mathrm{C}^{(i)} $ in $	\mathbf{F}_\mathrm{RF}^{(i)} = \left[\mathbf{F}_\mathrm{R} \;| \; \mathbf{F}_\mathrm{C}^{(i)} \right]$). Therefore, the SPIM-ISAC system does not introduce additional degrees of freedom for sensing performance. 
		\end{remark}
	}
	

	\subsection{Radar Parameter Estimation}
	\label{sec:RadarParEst}
	The radar-only beamformer $\mathbf{F}_\mathrm{R}$ is constructed as the steering matrix corresponding to $\Phi_k$, $\forall k$, which are estimated during the search phase of the radar~\cite{elbir2021JointRadarComm}. To this end, the BS first transmits probing signals, which are reflected and processed by the BS to estimate the target directions.
	
	Define $\tilde{\mathbf{X}}_\mathrm{r}[m]\in \mathbb{C}^{N_\mathrm{T}\times T}$ as the radar probing signal  transmitted by the BS for $T$ data snapshots along the fast-time axis~\cite{mimoRadar_WidebandYu2019May,jrc_TCOM_Liu2020Feb}. {In particular, $\tilde{\mathbf{X}}_\mathrm{r}[m]\in \mathbb{C}^{N_\mathrm{T}\times T}$ is designed as OFDM linear frequency modulation (LFM) signal~\cite{lfm_ofdm_waveform_Wang2022Dec,jrc_TCOM_Liu2020Feb}. } $\tilde{\mathbf{X}}_\mathrm{r}[m]$ has the property $\mathbb{E}\{ \tilde{\mathbf{X}}_\mathrm{r}[m] \tilde{\mathbf{X}}_\mathrm{r}^\textsf{H}[m] \} = \frac{P_\mathrm{r}T}{M N_\mathrm{T}}\mathbf{I}_{N_\mathrm{T}}$, where $P_\mathrm{r}$ is the radar transmit power. The $N_\mathrm{RF}\times T$ echo signal reflected from the $K$ targets is
	\begin{align}
		\label{radarReceived}
		\tilde{\mathbf{Y}}[m] = \sum_{k = 1}^K {\beta}_k {e^{-\mathrm{j}2\pi \tilde{\tau}_k} } \tilde{\mathbf{a}}_\mathrm{T}(\Phi_k) \mathbf{a}_\mathrm{T}^\textsf{T}(\Phi_k) \tilde{\mathbf{X}}_\mathrm{r}[m] + \tilde{\mathbf{N}}[m],
	\end{align}
	{where ${\beta}_k\in \mathbb{C}$ and $\tilde{\tau}_k$ denote the reflection coefficient and the delay of the $k$-th target echo signal, respectively.} $\mathbf{a}_\mathrm{T}(\Phi_k)\in \mathbb{C}^{N_\mathrm{T}}$ is the transmit array steering vector corresponding to the $k$-th target DoA angle $\Phi_k$ and $\tilde{\mathbf{a}}_\mathrm{T}(\Phi_k) = \mathbf{W}_\mathrm{RF}^\textsf{H} {\mathbf{a}}_\mathrm{T}(\Phi_k)\in \mathbb{C}^{N_\mathrm{RF}}$ is the equivalent receive steering vector for $\mathbf{W}_\mathrm{RF}\in \mathbb{C}^{N_\mathrm{T}\times N_\mathrm{RF}}$ being the analog combiner matrix~\cite{jrc_TCOM_Liu2020Feb}, and $\tilde{\mathbf{N}}[m] = \mathbf{W}_\mathrm{RF}^\textsf{H}\bar{\mathbf{N}}[m]\in \mathbb{C}^{N_\mathrm{RF}\times T}$  represents the noise term, where $\bar{\mathbf{N}}[m] = [\bar{\mathbf{n}}_1[m],\cdots, \bar{\mathbf{n}}_T[m]]\in \mathbb{C}^{N_\mathrm{T}\times T}$ with $\bar{\mathbf{n}}_t[m]\sim \mathcal{CN}(\mathbf{0},\tilde{\sigma}_n^2\mathbf{I}_{N_\mathrm{T}})$. Denote the radar target steering matrix and reflection coefficients by $\tilde{\mathbf{A}}_\mathrm{T}(\Phi) = [\tilde{\mathbf{a}}_\mathrm{T}(\Phi_1),\cdots,\tilde{\mathbf{a}}_\mathrm{T}(\Phi_K)]$ and $\boldsymbol{\Xi} = \mathrm{diag}\{{\beta}_1e^{-\mathrm{j}2\pi \tilde{\tau}_1}, \cdots, {\beta}_Ke^{-\mathrm{j}2\pi \tilde{\tau}_K} \}\in \mathbb{C}^{K\times K}$, then (\ref{radarReceived}) becomes
	\begin{align}
		\label{arrayData}
		\tilde{\mathbf{Y}}[m] = \tilde{\mathbf{A}}_\mathrm{T}(\Phi)    \boldsymbol{\Xi} \mathbf{A}_\mathrm{T}^\textsf{T}(\Phi)\tilde{\mathbf{X}}_\mathrm{r}[m] + \tilde{\mathbf{N}}[m].
	\end{align}

	{In order to estimate the target directions, we invoke the wideband MUSIC algorithm~\cite{music,wideband_doaEst_Wideband_Friedlander1993Apr}.} Define $\mathbf{R}_{\tilde{\mathbf{Y}}}[m]\in \mathbb{C}^{N_\mathrm{RF}\times N_\mathrm{RF}}$ as the covariance matrix of $\tilde{\mathbf{Y}}[m]$, i.e., 
	\begin{align}
		\mathbf{R}_{\tilde{\mathbf{Y}}}[m] &= \frac{1}{T}\tilde{\mathbf{Y}}[m] \tilde{\mathbf{Y}}^\textsf{H}[m] \nonumber\\
		&= \frac{1}{T} \tilde{\mathbf{A}}_\mathrm{T}(\Phi) \left( \frac{P_\mathrm{r}T}{MN_\mathrm{T}}\widetilde{\boldsymbol{\Xi} }\right) \tilde{\mathbf{A}}_\mathrm{T}^\textsf{H}(\Phi) +  \frac{1}{T}\tilde{\mathbf{N}}[m]\tilde{\mathbf{N}}^\textsf{H}[m] \nonumber\\
		& \approx \frac{P_\mathrm{r}}{MN_\mathrm{T}} \tilde{\mathbf{A}}_\mathrm{T}(\Phi) \widetilde{\boldsymbol{\Xi} } \tilde{\mathbf{A}}_\mathrm{T}^\textsf{H}(\Phi)  +  \tilde{\sigma}_n^2 N_\mathrm{T} \mathbf{I}_{\mathrm{N}_\mathrm{RF}},
	\end{align}
	where $\tilde{\mathbf{N}}[m]\tilde{\mathbf{N}}^\textsf{H}[m] = \tilde{\sigma}_n^2 T \mathbf{W}_\mathrm{RF}^\textsf{H}\mathbf{W}_\mathrm{RF} \approx \tilde{\sigma}_n^2 TN_\mathrm{T}\mathbf{I}_{N_\mathrm{RF}} $ and  $\widetilde{\boldsymbol{\Xi} }\in \mathbb{C}^{K\times K} $ is defined as $	\widetilde{\boldsymbol{\Xi} } =  \boldsymbol{\Xi}\mathbf{A}_\mathrm{T}^\textsf{T}(\Phi)\mathbf{A}_\mathrm{T}^*(\Phi)\boldsymbol{\Xi}^*$. 	Then, the eigendecomposition of $\mathbf{R}_{\tilde{\mathbf{Y}}}[m]$ yields
	\begin{align}
		\label{covarianceY}
		\mathbf{R}_{\tilde{\mathbf{Y}}}[m] = \mathbf{U}[m] \boldsymbol{\Theta}[m] \mathbf{U}^\textsf{H}[m],
	\end{align}
	where $\boldsymbol{\Theta}[m]\in \mathbb{C}^{N_\mathrm{RF}\times N_\mathrm{RF}}$ is a diagonal matrix composed of the eigenvalues of $\mathbf{R}_{\tilde{\mathbf{Y}}}[m]$ in a descending order, and $\mathbf{U}[m] = \left[\mathbf{U}_\mathrm{S}[m]\hspace{2pt} \mathbf{U}_\mathrm{N}[m] \right]\in \mathbb{C}^{N_\mathrm{RF}\times N_\mathrm{RF}}$ corresponds to the eigenvector matrix; $\mathbf{U}_\mathrm{S}[m]\in\mathbb{C}^{N_\mathrm{RF}\times K}$ and $\mathbf{U}_\mathrm{N}[m]\in \mathbb{C}^{N_\mathrm{RF}\times N_\mathrm{RF}-K}$ are the signal and noise subspace eigenvector matrices, respectively. The columns of $\mathbf{U}_\mathrm{S}[m]$ and  $\tilde{\mathbf{A}}_\mathrm{T}(\Phi)$ span the same space that is orthogonal to the eigenvectors in $\mathbf{U}_\mathrm{N}[m]$ as 
	\begin{align}
		\label{musicCost}
		\| \mathbf{U}_\mathrm{N}^\textsf{H}[m]\tilde{\mathbf{a}}_\mathrm{T}(\Phi_k) \|_2^2 = 0,
	\end{align}
	for $k\in \mathcal{K}$ and $m\in \mathcal{M}$~\cite{music}. Thus, the estimates of the radar targets can be founds from the combined MUSIC spectra, i.e.,
	\begin{align}
		\label{musicSpectra2}
		\zeta(\Phi) = \sum_{m = 1}^M \zeta_{m}(\Phi),
	\end{align}
	where $	\zeta_{m}(\Phi) $ is the spectrum corresponding to the $m$-th subcarrier as $	\zeta_m(\Phi) = \frac{1}{\mathbf{a}_\mathrm{T}^\textsf{H}(\Phi)\mathbf{U}_\mathrm{N}[m]\mathbf{U}_\mathrm{N}^\textsf{H}[m] \mathbf{a}_\mathrm{T}(\Phi) }.$

	The MUSIC spectra in (\ref{musicSpectra2}) yields $MK$ peaks, which are deviated due to beam-split while correct MUSIC spectra should include $K$ peaks which are aligned for $m\in \mathcal{M}$. In other words, beam-split-corrected steering vectors should be used to accurately compute the MUSIC spectrum. This is the concept of our  BSA-MUSIC algorithm, described below, in which beam-split-corrected steering vectors are employed for the computation of the MUSIC spectrum. 
	
	
	Define $\mathbf{a}_\mathrm{T}(\Phi_m)\in\mathbb{C}^{N_\mathrm{T}}$ as the BSA SD steering vector for the nominal SI steering vector  $\mathbf{a}_\mathrm{T}(\Phi)$. The $n$-th entry of the BSA steering vector is explicitly defined as $[\mathbf{a}_\mathrm{T}(\Phi_m)]_n = \frac{1}{\sqrt{N_\mathrm{T}}}\exp \{- \mathrm{j} \pi (n-1) \Phi_m   \}$ whereas $[\mathbf{a}_\mathrm{T}(\Phi)]_n = \frac{1}{\sqrt{N_\mathrm{T}}}\exp \{- \mathrm{j} \frac{2\pi d}{\lambda_m } (n-1) \Phi   \}$. The beam-split correction implies that $\mathbf{a}_\mathrm{T}(\Phi) = \mathbf{a}_\mathrm{T}(\Phi_m)$ holds, such  that while the frequency $f_m$ varies, $\mathbf{a}_\mathrm{T}(\Phi_m)$ points to $\Phi$, whereas $\mathbf{a}_\mathrm{T}(\Phi)$ points to $\eta_m\Phi$. In other words, we have 
	\begin{align}
		{[\mathbf{a}_\mathrm{T}(\Phi_m)]_n} - {[\mathbf{a}_\mathrm{T}(\Phi)]_n} &= 0 \nonumber\\
		{\frac{1}{\sqrt{N_\mathrm{T}}}e^{-\mathrm{j} \pi(n-1)\Phi_m   }} - {\frac{1}{\sqrt{N_\mathrm{T}}}  e^{-\mathrm{j} \frac{2\pi \frac{\lambda_c}{2}  }{ \lambda_m }(n-1)\Phi   } } &= 0 \nonumber \\
		{e^{-\mathrm{j} \pi (n-1)\Phi_m   }} - { e^{-\mathrm{j} \pi \frac{\lambda_c}{\lambda_m} (n-1)\Phi   } }  &= 0, \label{def_a_am}
	\end{align}
	which yields $\Phi_m = \frac{\lambda_c}{\lambda_m}\Phi = \eta_m\Phi$.

	To provide further insight, we examine the array gain, which also holds for computing the MUSIC spectrum~\cite{music}, for wideband scenario in the following lemma, for which we define the array gain  $A_G(\Phi,m)$ for $\Phi$ at the $m$-th subcarrier as $	A_G(\Phi,m) = \frac{ |\mathbf{a}_\mathrm{T}^\textsf{H}(\Phi)\mathbf{a}_\mathrm{T}(\Phi_m)   |^2  }{N_\mathrm{T}^2}.$
	%

	\begin{lemma}
		\label{lemma1}
		Let $\mathbf{a}_\mathrm{T}(\Phi_m)$ and $\mathbf{a}_\mathrm{T}(\Phi)$ be the BSA and nominal steering vectors for an arbitrary direction $\Phi$ and subcarrier $m\in \mathcal{M}$ as defined in (\ref{def_a_am}), respectively. Then, $\mathbf{a}_\mathrm{T}(\Phi_m)$ achieves the maximum array gain, i.e., $A_G(\Phi,m) = \frac{ |\mathbf{a}_\mathrm{T}^\textsf{H}(\Phi)\mathbf{a}_\mathrm{T}(\Phi_m)   |^2  }{N_\mathrm{T}^2} $, if $\Phi_m = \eta_m \Phi $.
	\end{lemma}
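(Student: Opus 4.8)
The plan is to collapse the array gain to a single-variable function of the angular mismatch and then locate its maximizer. First I would substitute the explicit entries of the two steering vectors. Using $d=\lambda_c/2$ so that $\frac{2\pi d}{\lambda_m}=\pi\eta_m$, the nominal vector has $[\mathbf{a}_\mathrm{T}(\Phi)]_n=\frac{1}{\sqrt{N_\mathrm{T}}}e^{-\mathrm{j}\pi\eta_m(n-1)\Phi}$ while the BSA vector has $[\mathbf{a}_\mathrm{T}(\Phi_m)]_n=\frac{1}{\sqrt{N_\mathrm{T}}}e^{-\mathrm{j}\pi(n-1)\Phi_m}$. Conjugating the first factor, the per-element product telescopes to a pure phase, so the inner product becomes $\mathbf{a}_\mathrm{T}^\textsf{H}(\Phi)\mathbf{a}_\mathrm{T}(\Phi_m)=\frac{1}{N_\mathrm{T}}\sum_{n=1}^{N_\mathrm{T}} e^{\mathrm{j}\pi(n-1)\delta}$, where $\delta:=\eta_m\Phi-\Phi_m$ is the residual mismatch in the spatial domain.

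Next I would sum the finite geometric series, obtaining $\frac{1}{N_\mathrm{T}}e^{\mathrm{j}\pi(N_\mathrm{T}-1)\delta/2}\frac{\sin(\pi N_\mathrm{T}\delta/2)}{\sin(\pi\delta/2)}$. Taking the squared modulus kills the linear-phase prefactor, and after the $N_\mathrm{T}^{-2}$ normalization built into the definition of $A_G$ this reduces to the compact form $A_G(\Phi,m)=\left|\xi\!\left(\delta/2\right)\right|^2$, with $\xi$ the Dirichlet sinc defined in the notation. This is the crux of the argument: the gain depends on $\Phi_m$ only through the scalar $\delta/2$, so maximizing $A_G$ is equivalent to driving $\delta$ to zero, which is exactly the condition $\Phi_m=\eta_m\Phi$.

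Finally I would invoke the elementary bound $|\xi(a)|\le 1$ with equality at $a=0$, where the indeterminate $0/0$ form is resolved by the limit $\xi(0)=1$; this gives $\delta=0$, i.e. $\Phi_m=\eta_m\Phi$, as the maximizer and shows the peak value is attained. The step I expect to require the most care is \emph{uniqueness}: $|\xi|$ also reaches $1$ at every nonzero integer argument, so I must restrict attention to the physically admissible window. Since $\Phi,\Phi_m\in[-1,1]$ (equivalently $\tilde\phi,\tilde\theta\in[-\pi/2,\pi/2]$) and $\eta_m=f_m/f_c$ stays near unity over the band, $\delta/2$ is confined to an interval containing no nonzero integer, isolating $a=0$ as the sole maximizer. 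As an independent, coordinate-free confirmation one may use Cauchy--Schwarz: the two steering vectors have equal norm, so $|\mathbf{a}_\mathrm{T}^\textsf{H}(\Phi)\mathbf{a}_\mathrm{T}(\Phi_m)|$ is maximal precisely when they are collinear, and the entrywise identity (\ref{def_a_am})---whose leading entries already coincide, forcing the proportionality constant to be one---shows this equality holds exactly at $\Phi_m=\eta_m\Phi$.
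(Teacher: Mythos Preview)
Your proposal is correct and follows essentially the same route as the paper: both reduce the inner product to a finite geometric series, sum it to the Dirichlet kernel $|\xi(\cdot)|^2$, and read off the maximizer from the vanishing of its argument (your $\delta/2$ is precisely $-\mu_m$ in the paper's notation, which is immaterial under $|\cdot|^2$). Your treatment is in fact more careful than the paper's, which neither addresses the integer-periodicity/uniqueness issue nor offers the Cauchy--Schwarz cross-check.
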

	
	\begin{proof}
		See Appendix~\ref{appen2ArrayGain}.
	\end{proof}

	Using the aforementioned analysis and Lemma 1, the BSA-MUSIC spectrum is  $	\widetilde{\zeta}(\Phi) = \sum_{m = 1}^M \widetilde{\zeta}_{m}(\Phi),$ where 
	\begin{align}
		\label{musicSpectraBSA}
		\widetilde{\zeta}_{m}(\Phi) = \frac{1}{\boldsymbol{a}_m^\textsf{H}(\Phi)\mathbf{U}_\mathrm{N}[m]\mathbf{U}_\mathrm{N}^\textsf{H}[m] \boldsymbol{a}_m(\Phi) },
	\end{align}
	where $\boldsymbol{a}_m (\Phi) = \mathbf{W}_\mathrm{RF}^\textsf{H} \mathbf{a}_\mathrm{T}(\Phi_m) \in \mathbb{C}^{N_\mathrm{RF}}$ denotes the beam-split-corrected virtual steering vector. The $K$ highest peaks of the BSA-MUSIC spectrum in (\ref{musicSpectraBSA}) yields the radar target estimates $\{\hat{\Phi}_k\}_{k = 1}^K$.

	\begin{remark} Since there are limited number of RF chains, the size of the collected array data  in (\ref{radarReceived}) for the MUSIC algorithm is $N_\mathrm{RF}\times T$, which allows us to identify $K \leq  N_\mathrm{RF}-1$ targets. In order to improve the identifiability condition, full array data is collected via subarray processing. In other words, the array data is obtained at multiple time slots, say $T_\mathrm{slot} = N_\mathrm{T}/N_\mathrm{RF}$ provided that the phase alignment between the time slots is properly handled~\cite{widebandDoAEst_Hybrid_1_Shu2018Feb}. Then, the echo signal in (\ref{radarReceived}) is collected in $T_\mathrm{slot}$ time slots and the  $N_\mathrm{T}\times T$ array data is constructed as $		\widetilde{\mathbf{Y}}[m] = [\tilde{\mathbf{Y}}_{1}^\textsf{T}[m],\cdots, \tilde{\mathbf{Y}}_{T_\mathrm{slot}}^\textsf{T}[m]]^\textsf{T}\in \mathbb{C}^{N_\mathrm{T}\times T},$
		for which the identifiability condition is $K\leq N_\mathrm{T}-1$.
	\end{remark}

	%
	%
	
	\begin{figure}[t]
		\centering
		{\includegraphics[draft=false,width=.99\columnwidth]{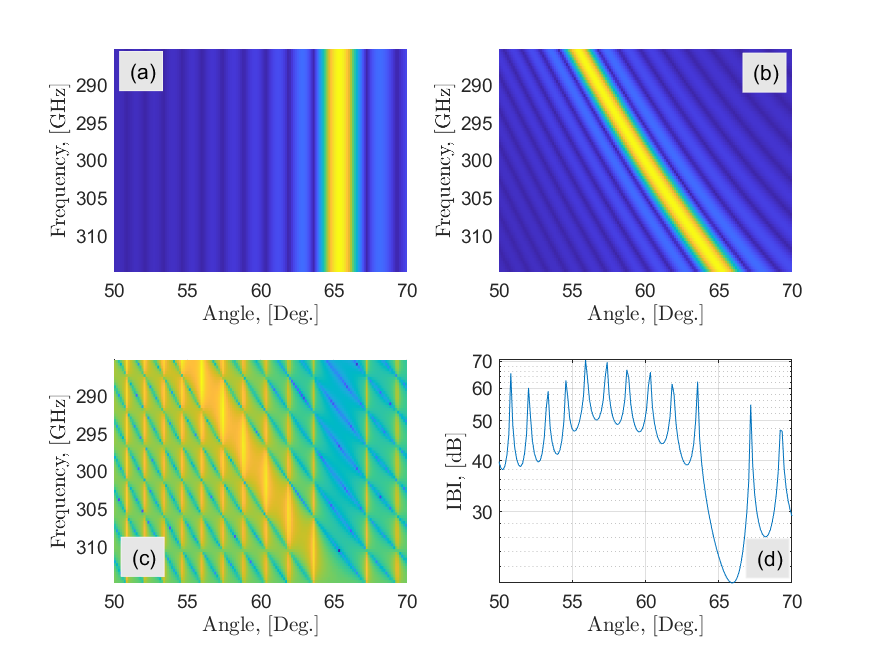} } 
		\caption{ {Array gain for (a) the beam-split-free and (b) beam-split-corrupted beams generated at $65^\circ$ and $60^\circ$, respectively. (c) The IBI computed for these two beams, and (d) the total IBI across the subcarriers.  }
		}
		
		\label{fig_IBI}
	\end{figure}

	{We further examine the inter-beam interference (IBI) in the presence of beam-split. The IBI from the beam direction $\Phi_i$ to $\Phi_j$ at frequency $f_m$ is~\cite{interBeamInterference1_Afeef2022Sep}
		\begin{align}
			\mathrm{IBI}(\Phi_i,\Phi_j,f_m) = \frac{A_G(\Phi_i,f_m)  }{A_G(\Phi_j,f_c  )},
		\end{align}	
		where $A_G(\Phi_i,f_m)$ and $A_G(\Phi_j,f_c)$ represent the beam-split-corrupted and beam-split-free array gains for the beam directions of $\Phi_i$ and $\Phi_j$, respectively. In order to illustrate the angular deviation across the subcarriers, we present the beam-split-free and beam-split-corrupted array gains of the beams generated at $65^\circ$ and $60^\circ$ in Fig.~\ref{fig_IBI}(a) and Fig.~\ref{fig_IBI}(b), respectively. Then, we show the IBI computed for these two beams in Fig.~\ref{fig_IBI}(c). The total IBI across the subcarriers is $\mathrm{IBI}(\Phi_i,\Phi_j) = \sum_{m=1}^{M} \mathrm{IBI}(\Phi_i,\Phi_j,f_m) $ (Fig.~\ref{fig_IBI}(b)). We observe that high IBI occurs for the angular sector of $[55^\circ, 65^\circ]$ because the beam-split-corrupted beam varies in this range with the change of subcarrier frequency.
	}

	\subsection{Communications Parameter Estimation}
	\label{sec:CommParEst}
	The communications-only analog beamformer ${\mathbf{F}}_\mathrm{C}^{(i)}$ is constructed from steering vectors corresponding to the path directions $\{\vartheta_l\}_{l = 1}^L$. 	This is achieved by the communication user feeding back $\{\vartheta_l\}_{l = 1}^L$ to the BS after the channel acquisition stage at the user side. 
	


	{In order to estimate the channel ${\mathbf{H}}[m]$, and eventually $\{\vartheta_l\}_{l = 1}^L$, we employ an OMP-based approach relying on a BSA dictionary. The key idea of the proposed BSA dictionary is to utilize the prior knowledge of $\eta_m$ to obtain beam-split-corrected steering vectors. Thus, a BSA dictionary is constructed, wherein the steering vectors are generated with the directions that are affected by beam-split. }Then, the physical direction  can readily be found  as $\phi = \frac{\varphi_{m}}{\eta_m}$, $\theta = \frac{\vartheta_{m}}{\eta_m}$ for an arbitrary spatial direction $\varphi_{m},\vartheta_{m}\in [-1,1]$, $\forall m\in \mathcal{M}$. Using this observation, we design the BSA dictionaries $\overline{\mathbf{P}}_m\in \mathbb{C}^{N_\mathrm{R}\times G}$ and $\overline{\mathbf{Q}}_m\in \mathbb{C}^{N_\mathrm{T}\times G}$, where $G$ is the grid size. Then, we have 
	\begin{align}
		\overline{\mathbf{P}}_m &= [{\mathbf{a}}_\mathrm{R}(\varphi_{1,m}),\cdots,{\mathbf{a}_\mathrm{R}}(\varphi_{G,m}) ],\\
		\overline{\mathbf{Q}}_m &= [\mathbf{a}_\mathrm{T}(\vartheta_{1,m}),\cdots,\mathbf{a}_\mathrm{T}(\vartheta_{G,m}) ], \label{bsadictionaries}
	\end{align}
	where  $\mathbf{a}_\mathrm{R}(\varphi_{g,m})$ and $\mathbf{a}_\mathrm{T}(\vartheta_{g,m})$ are $N_\mathrm{R}\times 1$ and $N_\mathrm{T}\times 1$ steering vectors for $g = 1,\cdots, G$. 
	


	
	In order to estimate the channel in downlink, the BS employs $J_\mathrm{T}$ beamformer vectors as $\tilde{\mathbf{F}}= [\tilde{\mathbf{f}}_1,\cdots, \tilde{\mathbf{f}}_{J_\mathrm{T}}]\in \mathbb{C}^{N_\mathrm{T}\times {J_\mathrm{T}}}$ to transmit ${J_\mathrm{T}}$ orthogonal pilots, $\tilde{\mathbf{S}}[m] = \mathrm{diag}\{\tilde{s}_1[m],\cdots, \tilde{s}_{J_\mathrm{T}}[m]\}\in \mathbb{C}^{{J_\mathrm{T}}\times {J_\mathrm{T}}}$. For the transmit pilots corresponding to each $\tilde{\mathbf{f}}_j$, the user with $\bar{N}_\mathrm{RF}$ RF chain employs  ${J_\mathrm{R}}$ ($J_\mathrm{R}\leq N_\mathrm{R}$) combining vectors $\tilde{\mathbf{w}}_j$ as $\tilde{\mathbf{W}} = [\tilde{\mathbf{w}}_1,\cdots, \tilde{\mathbf{w}}_{{J_\mathrm{R}}}]\in \mathbb{C}^{N_\mathrm{R}\times {J_\mathrm{R}}}$. Therefore, the total channel usage for processing all pilots during training is $J_\mathrm{T}\lceil\frac{{J_\mathrm{R}}}{\bar{{N}}_\mathrm{RF}}\rceil$. At the user side, the received  ${J_\mathrm{R}}\times {J_\mathrm{T}}$ signal is $	{\mathbf{Y}}[m] = \tilde{\mathbf{W}}^\textsf{H} \mathbf{H}[m]\tilde{\mathbf{F}}\tilde{\mathbf{S}} + \tilde{\mathbf{E}}[m],$
	where ${\mathbf{E}}[m] = \tilde{\mathbf{W}}^\textsf{H}\mathbf{N}[m]$ corresponds to the effective noise term. Assuming $\tilde{\mathbf{S}}[m] = \mathbf{I}_{J_\mathrm{T}}$, $\forall m\in \mathcal{M}$, we get 
	$	{\mathbf{Y}}[m] = \tilde{\mathbf{W}}^\textsf{H}\mathbf{H}[m]\tilde{\mathbf{F}}[m] + {\mathbf{E}}[m],$
	which is rewritten in vector form as
	\begin{align}
		\label{y_vector}
		\mathbf{y}[m] = ( \tilde{\mathbf{F}}^\textsf{T}\otimes \tilde{\mathbf{W}}^\textsf{H}) \mathbf{h}[m] + {\mathbf{e}}[m],
	\end{align}
	where $\mathbf{y}[m] = \mathrm{vec}\{{\mathbf{Y}}[m]\}\in \mathbb{C}^{{J_\mathrm{R}} {J_\mathrm{T}}}$,  $\mathbf{h}[m] = \mathrm{vec}\{\mathbf{H}[m]\}$ and ${\mathbf{e}}[m] = \mathrm{vec}\{{\mathbf{E}}[m]\}$. By exploiting the sparsity of the channel, (\ref{y_vector}) is rewritten as
	\begin{align}
		\mathbf{y}[m] = \boldsymbol{\Psi}_m  \mathbf{x}[m] + {\mathbf{e}}[m],\label{sparseSignalModel}
	\end{align}
	where $\mathbf{x}[m]\in\mathbb{C}^{G^2}$  is an $L$-sparse vector,
	and $\boldsymbol{\Psi}_m \in \mathbb{C}^{{J_\mathrm{R}}{J_\mathrm{T}}\times G^2}$ is the  dictionary matrix constructed from (\ref{bsadictionaries}) as 
	$	\boldsymbol{\Psi}_m = (\tilde{\mathbf{F}}^\textsf{T}\overline{\mathbf{P}}_m^*) \otimes (\tilde{\mathbf{W}}^\textsf{H} \overline{\mathbf{Q}}_m).$

	Given the received signal in (\ref{sparseSignalModel}), we  employ the OMP algorithm to effectively recover communications parameters $\{\hat{\phi}_l, \hat{\theta}_l, {\gamma}_l\}_{l = 1}^L$   by using the BSA-OMP approach presented in Algorithm~\ref{alg:BSACE}, wherein the physical path directions are found in steps $2$-$9$, and the channel  is reconstructed as $\hat{\overline{\mathbf{H}}}[m]$ from beam-split-corrected array responses $\hat{\mathbf{P}}$, $\hat{\mathbf{Q}}$ and $\hat{\boldsymbol{\Lambda}}_m$ in  steps $10$-$14$. Note that the complexity order of BSA-OMP is the same as that of conventional OMP techniques~\cite{heath2016overview}.

	{
		\section{Beamformer Design for SPIM-ISAC}
		\label{sec:SPIMISAC}
		Once the radar and communications parameters are estimated, the task is to design the hybrid beamformers for SPIM-ISAC system. Towards this end, we propose a two-step approach: obtain the analog beamformers using the estimated radar and communications parameters in Sec.~\ref{sec:RadarParEst} and Sec.\ref{sec:CommParEst}, respectively.
	}

	The analog beamformer $\mathbf{F}_\mathrm{RF}^{(i)}\in \mathbb{C}^{N_\mathrm{T}\times N_\mathrm{RF}}$ is comprised of the radar and communications analog beamformers, i.e., $	\mathbf{F}_\mathrm{RF}^{(i)} = \left[\mathbf{F}_\mathrm{R} \;| \; \mathbf{F}_\mathrm{C}^{(i)} \right]$ as in (\ref{Frf}),	where the radar-only analog beamformer $\mathbf{F}_\mathrm{R}\in \mathbb{C}^{N_\mathrm{T}\times K}$ is $	\mathbf{F}_\mathrm{R} = \left[\mathbf{a}_\mathrm{T}(\hat{\Phi}_1), \cdots, \mathbf{a}_\mathrm{T}(\hat{\Phi}_K)\right].$
	Similarly, the communication-only analog beamformer for the $i$-th spatial pattern, i.e., $\mathbf{F}_\mathrm{C}^{(i)} \in \mathbb{C}^{N_\mathrm{T}\times L_\mathrm{S}}$ is $	\mathbf{F}_\mathrm{C}^{(i)} = \left[ \mathbf{a}_\mathrm{T}(\hat{\theta}_1), \cdots, \mathbf{a}_\mathrm{T}(\hat{\theta}_L)  \right]\mathbf{B}^{(i)},$ 
	where $\mathbf{B}^{(i)} $ is an $L\times L_\mathrm{S}$ selection matrix selecting the steering vectors corresponding to the $L_\mathrm{S}$ out of $L$ spatial paths for the $i$-th spatial pattern with the structure of $	\mathbf{B}^{(i)} = \left[\mathbf{b}_{i_1}, \cdots, \mathbf{b}_{i_{L_\mathrm{S}}}  \right],$
		%
	where $\mathbf{b}_{i_l}$ is the $i_l$-th column of identity matrix $\mathbf{I}_{L_\mathrm{S}}$. {Since $\mathbf{b}_{i_l}$ always includes a non-zero entry, $\mathbf{F}_\mathrm{C}^{(i)} \neq \emptyset$, $\forall i\in \mathcal{S}$. In other words, even if the transmitted SPIM bits are all zero, there exists an $N_\mathrm{T}\times L_\mathrm{S}$ analog beamformer  $\mathbf{F}_\mathrm{C}^{(i)}$ with a certain spatial pattern.  }

	We consider following three approaches for beamforming, which are also summarized in Algorithm~\ref{alg:HB}.

	\begin{algorithm}[t]
		\begin{algorithmic}[1] 
			\caption{ \bf Communications parameter estimation}
			\color{black}
			\Statex {\textbf{Input:}    $\mathbf{y}[m]$, $\boldsymbol{\Psi}_m$ and $\eta_m$, $\forall m\in \mathcal{M}$. \label{alg:BSACE}}
			\State  $l=1$, $\bar{\mathcal{I}}_{l-1} = \mathcal{I}_{l-1} = \emptyset$,
			$\mathbf{r}_{l-1}[m] = \mathbf{y}[m], \forall m\in \mathcal{M}$.
			\State \textbf{while} $l\leq L$ \textbf{do}
			
			\State \indent $\{u^\star, v^\star \}= \argmax_{u,v} \sum_{m=1}^{M}|\boldsymbol{\psi}_{u,v}^\textsf{H}[m]\mathbf{r}_{l-1}[m] |$, \par  \indent where $\hspace{-1pt}\boldsymbol{\psi}_{u,v}[m] \hspace{-1pt}=\hspace{-1pt} (\tilde{\mathbf{F}}^\textsf{T}\mathbf{a}_\mathrm{R}^*(\varphi_{u,m})) \otimes (\tilde{\mathbf{W}}^\textsf{H} {\mathbf{a}}_\mathrm{T}(\vartheta_{v,m})) $.
			\State \indent $\bar{\mathcal{I}}_{l} \gets \bar{\mathcal{I}}_{l-1} \bigcup \{u^\star \}$, $\hat{\phi}_{l} =\frac{\varphi_{u^\star,m}}{\eta_m}$.
			\State \indent $\mathcal{I}_{l} \gets \mathcal{I}_{l-1} \bigcup \{v^\star\}$, $\hat{\theta}_{l} =\frac{\vartheta_{v^\star,m}}{\eta_m}$.
			\State  \indent $\boldsymbol{\Psi}_m(\bar{\mathcal{I}}_l,\mathcal{I}_l) = (\tilde{\mathbf{F}}^\textsf{T}\overline{\mathbf{Q}}_m(\mathcal{I}_l)) \otimes (\tilde{\mathbf{W}}^\textsf{H} \overline{\mathbf{P}}_m(\bar{\mathcal{I}}_l))$.
			\State  \indent $\mathbf{r}_{l}[m] = \left( \mathbf{I}_{J_\mathrm{R}J_\mathrm{T}} -  \boldsymbol{\Psi}_m(\bar{\mathcal{I}}_l,\mathcal{I}_l) \boldsymbol{\Psi}_m^\dagger(\bar{\mathcal{I}}_l,\mathcal{I}_l) \right) \mathbf{y}[m]$.
			\State \indent $l\gets l+ 1$.
			\State \textbf{end while}
			\State   $\hat{\mathbf{P}}=  [{\mathbf{a}}_\mathrm{R}(\hat{\phi}_{1}), \cdots, {\mathbf{a}_\mathrm{R}}(\hat{\phi}_{L})]$. $\hat{\mathbf{Q}} =  [\mathbf{a}_\mathrm{T}(\hat{\theta}_{1}), \cdots, \mathbf{a}_\mathrm{T}(\hat{\theta}_{L})]$.
			\State  \textbf{for} $m\in \mathcal{M}$
			\State \indent $\mathrm{diag}\{\hat{\boldsymbol{\Lambda}}_m\} = \boldsymbol{\Psi}_m^\dagger(\bar{\mathcal{I}}_L,\mathcal{I}_{L}) \mathbf{y}[m]$.
			\State \indent$\hat{\overline{\mathbf{H}}}[m] = \hat{\mathbf{P}}\hat{\boldsymbol{\Lambda}}_m\hat{\mathbf{Q}}^\textsf{H}$.
			\State \textbf{end for}
			\Statex \textbf{Output:} $\hat{\mathbf{P}}$, $\hat{\mathbf{Q}}$ and $\hat{\boldsymbol{\Lambda}}_m$, $\hat{\overline{\mathbf{H}}}[m]$.
		\end{algorithmic} 
	\end{algorithm}

	\vspace{-12pt}
	\subsection{Hybrid Beamforming}
	\vspace{-6pt}
	Given the analog beamformer $\mathbf{F}_\mathrm{RF}^{(i)}$ as in (\ref{Frf}), the baseband beamformer is computed by minimizing the Euclidean distance between the hybrid beamformer and the joint radar-communications (JRC) beamformer, which is defined as $\mathbf{F}_\mathrm{CR}[m]\in \mathbb{C}^{N_\mathrm{T}\times N_\mathrm{S}}$. Specifically, $\mathbf{F}_\mathrm{CR}[m]$ is composed of radar-only beamformer $\mathbf{F}_\mathrm{R}$ and the unconstrained communication-only beamformer $\mathbf{F}_\mathrm{opt}[m]\in\mathbb{C}^{N_\mathrm{T}\times N_\mathrm{S}}$ (which can be obtained through the singular value decomposition (SVD) of $\mathbf{H}[m]$ (i.e., the singular vectors corresponding to the $N_\mathrm{S}$ largest singular values of $\mathbf{H}[m]$)~\cite{heath2016overview}).  Then, the JRC beamformer is defined as
	\begin{align}
		\label{Fcr}
		\mathbf{F}_\mathrm{CR}[m] = \varepsilon \mathbf{F}_\mathrm{opt}[m] + (1- \varepsilon) \mathbf{F}_\mathrm{R}\boldsymbol{\Pi}[m],
	\end{align} 
	where  $\boldsymbol{\Pi}[m]\in\mathbb{C}^{K\times N_\mathrm{S}}$ is a unitary matrix providing the change of dimensions between $\mathbf{F}_\mathrm{R}$ and $\mathbf{F}_\mathrm{opt}[m]$. {In (\ref{Fcr}), $0\leq \varepsilon\leq 1$ represents  the trade-off parameter between the radar and communications tasks. In particular, $\varepsilon=1$ ($\varepsilon = 0$) corresponds to the communications-only (radar-only) design. In ISAC, $\varepsilon$ controls	the trade-off between the accuracy/prominence of sensing and communications tasks~\cite{elbir_thz_jrc_Magazine_Elbir2022Aug}. The selection procedure of  $\varepsilon$ in the relevant literature includes the ratio of power budgets~\cite{tradeoff_parameterSelection_Chiriyath2015Sep} and the signal durations percentages of the coherent processing interval~\cite{tradeoff_CPI_Dokhanchi2019Feb} allocated for radar and communications tasks.} {Different beamformer designs may also be employed by tuning the trade-off parameter $\varepsilon$. For instance, the transmitted waveform can be optimized given the desired radar beampattern and the desired constellation symbol matrix~\cite{fanLiu_waveformDesign_Liu2018Jun}. However, our formulation in (\ref{Fcr}) provides a simple architecture to design the JRC hybrid beamformer that is obtained via conventional optimization techniques applied to communications-only systems~\cite{hybridBFAltMin,mimoRFChainHybrid}.}
	
	{Given   the JRC beamformer, the hybrid beamformer design problem is
		\begin{align}
			\label{problemJRC1}
			\minimize_{\mathbf{F}_\mathrm{RF}^{(i)},\{  \mathbf{F}_\mathrm{BB}^{(i)}[m]\}_{m =1}^M}  &\hspace{3pt} \frac{1}{M}\sum_{m =1}^M\|\mathbf{F}_\mathrm{RF}^{(i)}\mathbf{F}_\mathrm{BB}^{(i)}[m]  -  \mathbf{F}_\mathrm{CR}[m]\|_\mathcal{F} \nonumber \\
			\subjectto & \sum_{m =1}^M\| \mathbf{F}_\mathrm{RF}^{(i)}\mathbf{F}_\mathrm{BB}^{(i)}[m] \|_\mathcal{F} = MN_\mathrm{S}, \nonumber \\
			& |[\mathbf{F}_\mathrm{RF}^{(i)}]_{n,r}| = {1}/{\sqrt{N_\mathrm{T}}}.
		\end{align}
		This formulation ignores SPIM because $ \mathbf{F}_\mathrm{CR}[m]$ is not exclusively defined for a spatial pattern. Therefore, we follow an alternating approach to design the hybrid beamformers. First, we construct the analog beamformer and follow it with optimizing the baseband beamformer $\mathbf{F}_\mathrm{BB}^{(i)}[m]$ and the auxiliary matrix $\boldsymbol{\Pi}[m]$. 
	}
	
	By using the analog beamformer $\mathbf{F}_\mathrm{RF}^{(i)}$ and  $\mathbf{F}_\mathrm{CR}[m]$, the baseband beamformer corresponding to the $i$-th spatial pattern is 
	\begin{align}
		\mathbf{F}_\mathrm{BB}^{(i)}[m] = {\mathbf{F}_\mathrm{RF}^{(i)}}^\dagger \mathbf{F}_\mathrm{CR}[m],
	\end{align}
	which is then normalized as $\mathbf{F}_\mathrm{BB}^{(i)}[m] = \frac{\sqrt{N_\mathrm{S}} {\mathbf{F}_\mathrm{RF}^{(i)}}^\dagger \mathbf{F}_\mathrm{CR}[m]  }{\|\mathbf{F}_\mathrm{RF}^{(i)}\mathbf{F}_\mathrm{BB}^{(i)}[m]    \|_\mathcal{F}}$. The JRC beamformer is composed of the auxiliary matrix $\boldsymbol{\Pi}[m]$, which can be optimized as 
	\begin{align}
		\label{prob_FBB_P}
		&\minimize_{\overline{\boldsymbol{\Pi}}} \hspace{3pt} \|\mathbf{F}_\mathrm{RF}^{(i)}\overline{\mathbf{F}}_\mathrm{BB}^{(i)} - \overline{\mathbf{F}}_\mathrm{CR}   \|_\mathcal{F}^2  \nonumber \\
		&	\subjectto \hspace{5pt} \overline{\boldsymbol{\Pi}}\; \overline{\boldsymbol{\Pi}}^\textsf{H} = \mathbf{I}_K,
	\end{align}
	where $\overline{\mathbf{F}}_\mathrm{BB}^{(i)} = \left[ \mathbf{F}_\mathrm{BB}^{(i)}[1], \cdots, \mathbf{F}_\mathrm{BB}^{(i)}[M] \right]$, $\overline{\mathbf{F}}_\mathrm{CR} = \left[ \mathbf{F}_\mathrm{CR}[1], \cdots, \mathbf{F}_\mathrm{CR}[M] \right]$ and $\overline{\boldsymbol{\Pi}} = \left[\boldsymbol{\Pi}[1],\cdots, \boldsymbol{\Pi}[M] \right]$ are $N_\mathrm{RF}\times MN_\mathrm{S}$, $N_\mathrm{T}\times MN_\mathrm{S}$ and $K\times MN_\mathrm{S}$ matrices composed of information corresponding to all subcarriers, respectively. 
	
	The problem in (\ref{prob_FBB_P}) is called orthogonal Procrustes problem (OPP), and its solution can be found via SVD of the $K\times MN_\mathrm{S}$ matrix $\mathbf{F}_\mathrm{R}^\textsf{H} \mathbf{F}_\mathrm{RF}^{(i)} \overline{\mathbf{F}}_\mathrm{BB}^{(i)}$ and it is given by~\cite{procrustesProblem_Hurley1962Apr} $	\overline{\boldsymbol{\Pi}} = \widetilde{\boldsymbol{\Pi}} \mathbf{I}_{K\times MN_\mathrm{S}} \widetilde{\mathbf{V}},$
	where $\widetilde{\boldsymbol{\Pi}} \widetilde{\boldsymbol{\Sigma}} \widetilde{\mathbf{V}}  =  \mathbf{F}_\mathrm{R}^\textsf{H} \mathbf{F}_\mathrm{RF}^{(i)} \overline{\mathbf{F}}_\mathrm{BB}^{(i)}$ is the SVD of the $N_\mathrm{RF}\times N_\mathrm{S}$ matrix $\frac{1}{1- \varepsilon }\mathbf{F}_\mathrm{R}^\textsf{H} \left(\mathbf{F}_\mathrm{RF}^{(i)} \overline{\mathbf{F}}_\mathrm{BB}^{(i)} -  \varepsilon \overline{\mathbf{F}}_\mathrm{CR}\right)$, and $\mathbf{I}_{K\times MN_\mathrm{S}}  = \left[\mathbf{I}_K \hspace{1pt}|  \hspace{1pt} \mathbf{0}_{ MN_\mathrm{S}- K\times K}^\textsf{T}  \right]^\textsf{T}$. Then, by estimating $\mathbf{F}_\mathrm{BB}^{(i)}[m]$ and $\boldsymbol{\Pi}[m]$ iteratively, the hybrid beamformer weights are computed.

	\subsection{BSA Hybrid Beamforming}
	\label{sec:BeamSplitMitigation}
	
	As discussed in Section~\ref{sec:beamSplit}, beam-split can be compensated if SD analog beamformers are used. However, this approach is costly since it requires employing $MN_\mathrm{T}N_\mathrm{RF}$ (instead of $N_\mathrm{T}N_\mathrm{RF}$) phase-shifters. Instead, we propose an efficient BSA approach, wherein the effect of beam-split is handled in the baseband beamformer, which is SD. Therefore, the effect of beam-split is conveyed from analog domain to baseband.

	Denoted by $\breve{\mathbf{F}}_\mathrm{RF}^{(i)}[m]\in \mathbb{C}^{N_\mathrm{T}\times N_\mathrm{RF}}$,  the SD analog beamformer that can be computed from the SI analog beamformer $\mathbf{F}_\mathrm{RF}^{(i)}$ as $	\breve{\mathbf{F}}_\mathrm{RF}^{(i)}[m] = \frac{1}{\sqrt{N_\mathrm{T}}}   \boldsymbol{\Omega}^{(i)}[m],$
	where $\boldsymbol{\Omega}^{(i)}[m]\in \mathbb{C}^{N_\mathrm{T}\times N_\mathrm{RF}}$ includes the angle information of $\mathbf{F}_\mathrm{RF}^{(i)}$ as $[\boldsymbol{\Omega}^{(i)}[m]]_{n,j} = \exp \{\mathrm{j} \eta_m \angle \{[\mathbf{F}_\mathrm{RF}^{(i)}]_{n,j} \}\}$ for $n = 1,\cdots, N_\mathrm{T}$ and $j =1,\cdots, N_\mathrm{RF}$. As a result, the angular deviation in $\mathbf{F}_\mathrm{RF}^{(i)}$ due to beam-split is compensated  with $\eta_m$. 
	
	Now, we define $\widetilde{\mathbf{F}}_\mathrm{BB}^{(i)}[m]\in \mathbb{C}^{N_\mathrm{RF}\times N_\mathrm{S}}$ as the \textit{BSA digital beamformer} in order to achieve SD beamforming performance that can be obtained by the usage of SD analog beamformer $\breve{\mathbf{F}}_\mathrm{RF}^{(i)}[m]$. Hence, we aim to match the proposed \textit{BSA hybrid beamformer} $\mathbf{F}_\mathrm{RF}^{(i)} \widetilde{\mathbf{F}}_\mathrm{BB}^{(i)}[m]$ with the SD hybrid beamformer $\breve{\mathbf{F}}_\mathrm{RF}^{(i)}[m] \mathbf{F}_\mathrm{BB}^{(i)}[m] $ as
	\begin{align}
		\minimize_{\widetilde{\mathbf{F}}_\mathrm{BB}^{(i)}[m]} \| \mathbf{F}_\mathrm{RF}^{(i)} \widetilde{\mathbf{F}}_\mathrm{BB}^{(i)}[m] - \breve{\mathbf{F}}_\mathrm{RF}^{(i)}[m] \mathbf{F}_\mathrm{BB}^{(i)}[m] \|_\mathcal{F}^2,
	\end{align}
	for which $\widetilde{\mathbf{F}}_\mathrm{BB}^{(i)}[m]$ can be obtained as
	\begin{align}
		\label{fbbTilde}
		\widetilde{\mathbf{F}}_\mathrm{BB}^{(i)}[m] = {\mathbf{F}_\mathrm{RF}^{(i)}}^\dagger \breve{\mathbf{F}}_\mathrm{RF}^{(i)}[m] \mathbf{F}_\mathrm{BB}^{(i)}[m].
	\end{align}
	\begin{remark} Because of the reduced dimension of the baseband beamformer (i.e., $N_\mathrm{RF}< N_\mathrm{T}$), the BSA approach does not completely mitigate beam-split. Instead,  the beam-split is fully mitigated only if  ${\mathbf{F}_\mathrm{RF}^{(i)}}^\dagger \breve{\mathbf{F}}_\mathrm{RF}^{(i)}[m]  = \mathbf{I}_{\mathrm{N}_\mathrm{T}}$ so that the resulting hybrid beamformer $\mathbf{F}_\mathrm{RF}^{(i)} \widetilde{\mathbf{F}}_\mathrm{BB}^{(i)}[m]$ is equal to  $\breve{\mathbf{F}}_\mathrm{RF}^{(i)}[m] \mathbf{F}_\mathrm{BB}^{(i)}[m]$,  which requires $N_\mathrm{RF} = N_\mathrm{T}$. Nevertheless, the proposed approach yields satisfactory SE performance with beam-split compensation for a wide range of bandwidths (see Fig.~\ref{fig_SE_BW}).
	\end{remark}

	\subsection{SI- and SD-AO Beamforming}
	\label{sec:AOBF}
	The proposed $N_\mathrm{T}\times N_\mathrm{S}$ SI-AO beamformer is given by $\mathbf{F}_\mathrm{SI-AO}^{(i)} = \mathbf{F}_\mathrm{RF}^{(i)} \mathbf{D}$, where $\mathbf{D}\in \mathbb{C}^{N_\mathrm{RF}\times N_\mathrm{S}}$ is an amplitude controller matrix as $	\mathbf{D} = \left[ \begin{array}{c}
		(1-\varepsilon) \mathbf{I}_{K\times N_\mathrm{S}} \\
		\varepsilon \mathbf{I}_{N_\mathrm{S}}	\end{array}    \right],$
	which allows the trade-off between the radar and communication tasks~\cite{elbir2021JointRadarComm}, and it can be realized via variable gain amplifiers~\cite{amplitudeControl_Lee2020May}. Despite its simple structure, the AO baseband beamformer can demonstrate satisfactory SE performance (see Sec.~\ref{sec:Sim}).
	The SD-AO beamformer has a similar structure, but it employs SD analog beamformer as $\mathbf{F}_\mathrm{SD-AO}^{(i)}[m] = \breve{\mathbf{F}}_\mathrm{RF}^{(i)}[m] \mathbf{D}$, which, therefore, employs $MN_\mathrm{T}N_\mathrm{RF}$ phase shifters.


		\begin{algorithm}[t]
			\begin{algorithmic}[1]
				\caption{ \bf Hybrid beamformer design}
				\color{black}
				\Statex {\textbf{Input:}   $\{\hat{\Phi}_k \}_{k = 1}^K$,  $\{\hat{\phi}_l, \hat{\theta}_l, \hat{\gamma}_l\}_{l = 1}^L$, $i\in \mathcal{S}$, $\eta_m$ and $\tilde{\epsilon}$. \label{alg:HB}}
				\State $\mathbf{F}_\mathrm{R} = \left[\mathbf{a}(\hat{\Phi}_1), \cdots, \mathbf{a}(\hat{\Phi}_K)\right].$
				\State $\mathbf{F}_\mathrm{C}^{(i)} = \left[ \mathbf{a}_\mathrm{T}(\hat{\theta}_1), \cdots, \mathbf{a}_\mathrm{T}(\hat{\theta}_L)  \right]\mathbf{B}^{(i)}.$
				\State $	\mathbf{F}_\mathrm{RF}^{(i)} = \left[\mathbf{F}_\mathrm{R} \;| \; \mathbf{F}_\mathrm{C}^{(i)} \right].$
				\Statex \textbf{$\star$ Hybrid beamformer:} 
				\State \textbf{while} $\epsilon < \tilde{\epsilon}$ \textbf{do}
				\State \indent $	\mathbf{F}_\mathrm{CR}[m] = \varepsilon \mathbf{F}_\mathrm{opt}[m] + (1- \varepsilon) \mathbf{F}_\mathrm{R}\boldsymbol{\Pi}[m]$, $m\in \mathcal{M}$.
				\State \indent $\mathbf{F}_\mathrm{BB}^{(i)}[m] = {\mathbf{F}_\mathrm{RF}^{(i)}}^\dagger \mathbf{F}_\mathrm{CR}[m]$, $m\in \mathcal{M}$.
				\State \indent $\overline{\boldsymbol{\Pi}} = \widetilde{\boldsymbol{\Pi}} \mathbf{I}_{K\times MN_\mathrm{S}} \widetilde{\mathbf{V}}$.
				\State \indent $\epsilon = \sum_{m = 1}^{M}\|\mathbf{F}_\mathrm{RF}^{(i)}\mathbf{F}_\mathrm{BB}^{(i)}[m] - \mathbf{F}_\mathrm{CR}[m]   \|_\mathcal{F}^2$.
				\State \textbf{end}
				\Statex  \textbf{$\star$ BSA hybrid beamformer:}
				\State  $\breve{\mathbf{F}}_\mathrm{RF}^{(i)}[m] = \frac{1}{\sqrt{N_\mathrm{T}}}   \boldsymbol{\Omega}^{(i)}[m]$ where $[\boldsymbol{\Omega}^{(i)}[m]]_{n,j} = \exp \{\mathrm{j} {\eta_m} \angle \{[\mathbf{F}_\mathrm{RF}^{(i)}]_{n,j} \}\}$.
				\State   $\widetilde{\mathbf{F}}_\mathrm{BB}^{(i)}[m] = {\mathbf{F}_\mathrm{RF}^{(i)}}^\dagger \breve{\mathbf{F}}_\mathrm{RF}^{(i)}[m] \mathbf{F}_\mathrm{BB}^{(i)}[m]$.
				\Statex  \textbf{$\star$ SI-AO beamformer:} $\mathbf{F}_\mathrm{SI-AO}^{(i)} = \mathbf{F}_\mathrm{RF}^{(i)} \mathbf{D}$.
				\Statex \textbf{$\star$ SD-AO beamformer:} $\mathbf{F}_\mathrm{SD-AO}^{(i)}[m] = \breve{\mathbf{F}}_\mathrm{RF}^{(i)}[m] \mathbf{D}$.

				\Statex \textbf{Output:} $\mathbf{F}_\mathrm{RF}^{(i)}$, $\mathbf{F}_\mathrm{BB}^{(i)}[m]$, $\mathbf{F}_\mathrm{SI-AO}^{(i)}$, $\mathbf{F}_\mathrm{SD-AO}^{(i)}[m]$,  $\widetilde{\mathbf{F}}_\mathrm{BB}^{(i)}[m]$. 
			\end{algorithmic} 
		\end{algorithm}

		\begin{figure}[t]
			\centering
			{\includegraphics[draft=false,width=.99\columnwidth]{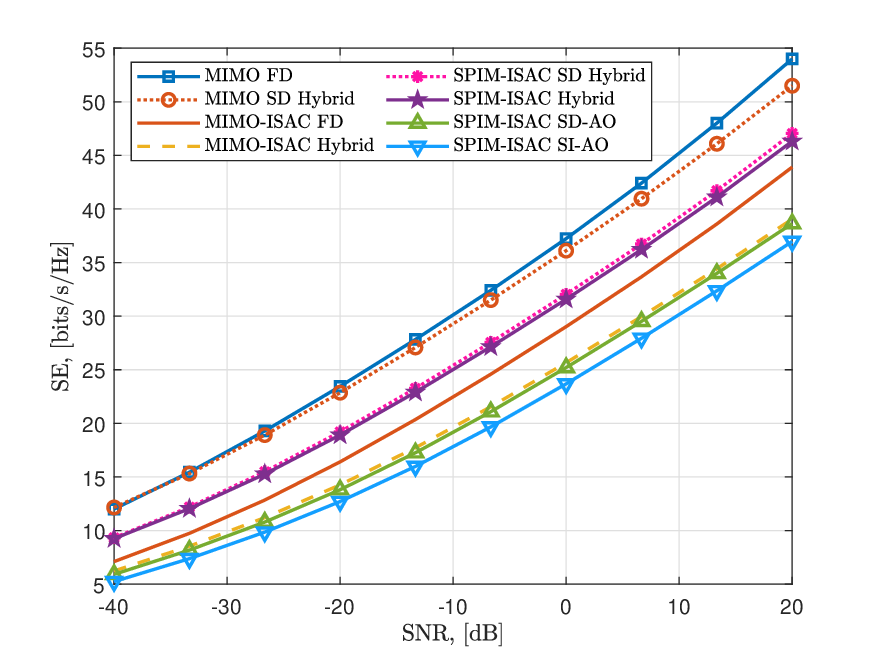} } 
			\caption{SE versus SNR when the radar-communications trade-off parameter $\varepsilon = 0.5$.
			}
			
			\label{fig_SE_SNR}
		\end{figure}


		\section{Numerical Experiments}
		\label{sec:Sim}
		
		We evaluated the performance of our SPIM-ISAC approach in comparison with FD and hybrid beamforming for MIMO-ISAC as well as SPIM-ISAC with SD-AO (Sec.~\ref{sec:AOBF}) and SI-AO beamformers~\cite{spim_bounds_JSTSP_Wang2019May}, in terms of SE and beamforming gain averaged over $500$ Monte Carlo trials. The number of antennas at the BS and the users are $N_\mathrm{T}=128$ and $N_\mathrm{R}=16$, respectively. The  carrier frequency and the bandwidth are selected as $f_c$ and $B = \frac{f_c}{10}$, respectively. {The number of subcarriers is $M=64$ and the grid size is set to $G=8N_\mathrm{T}$.} We select the number of available spatial paths, unless stated otherwise, as $L=8$ ($L_\mathrm{S} = 3$) and the number of targets is $K=2$. Thus, $P = 10$, $N_\mathrm{RF} = 5$ and $ N_\mathrm{S} = 3$. The target and path directions are drawn from $[-90^\circ,90^\circ]$ uniformly at random, while the path gains are selected as $\gamma_l \sim \mathcal{N}(1, (0.1)^2)$, $\forall l$~\cite{delayPhasePrecoding_THz_Dai2022Mar,elbir2021JointRadarComm}.

		\begin{figure}[t]
			\centering
			\subfloat[]{\includegraphics[draft=false,width=\columnwidth]{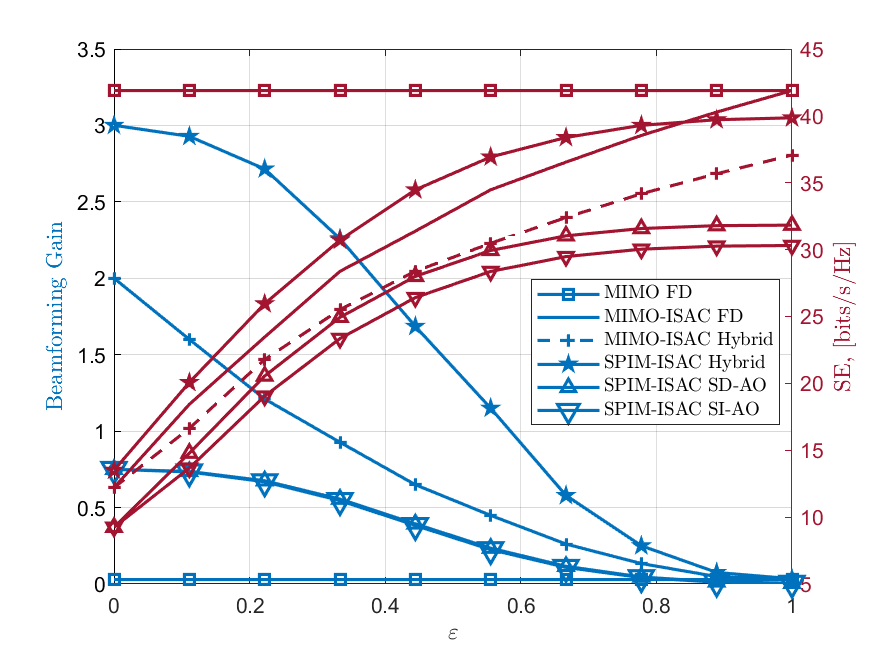} }\\
			\subfloat[]{\includegraphics[draft=false,width=\columnwidth]{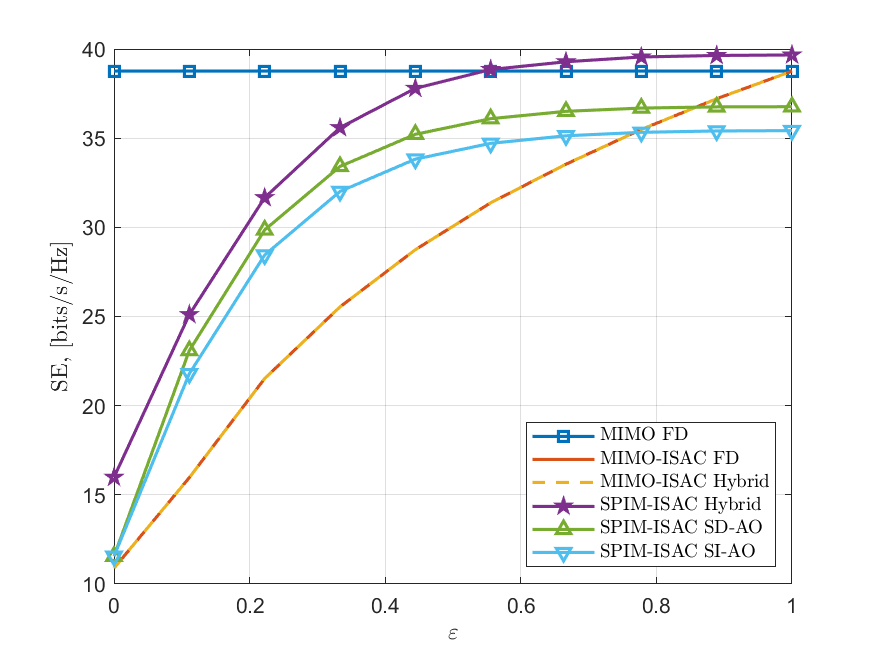} }
			\caption{(a) SE and beamforming gain performance versus $\varepsilon$ when $(L,L_\mathrm{S}) = (8,3)$. (b) SE versus $\varepsilon$  for  $(L,L_\mathrm{S}) = (12,10)$, when $\mathrm{SNR} = 0$ dB.
			}
			
			\label{fig_SE_eta}
		\end{figure}
		
		

		Fig.~\ref{fig_SE_SNR} shows the SE with respect to SNR  when the radar-communications trade-off parameter is $\varepsilon =0.5$. {The MIMO-ISAC and SPIM-ISAC beamformers are computed via (\ref{MI_mmwave_only}) and (\ref{MI_SPIM}), respectively.} The MIMO FD beamformer constitutes a benchmark while the JRC beamformer provides a trade-off between radar and communications. {Specifically, the computation of MIMO FD and MIMO-ISAC FD beamformers are obtained via $\mathbf{F}_\mathrm{opt}[m]$ and $\mathbf{F}_\mathrm{CR}[m]$ in (\ref{Fcr}), respectively.}  We observe from Fig.~\ref{fig_SE_SNR} that a significant improvement is achieved in SE with our proposed SPIM-ISAC hybrid beamforming approach compared to MIMO-ISAC even with JRC beamformer. Although hybrid beamformers are employed, the proposed SPIM approach provides higher SE thanks to additional transmitted information bits via SPIM. {Note that similar observations have also been made in the literature~\cite{spim_GBMM_Guo2019Jul,spim_bounds_JSTSP_Wang2019May} which, however, involves communications-only MIMO system design.} When we compare the proposed SD- and SI-AO beamforming techniques, the former exhibits higher SE as compared to the latter as the former takes advantages of SD implementation. Thus, the SD-AO beamformer is resilient to beam-split with the cost of employing $(M-1)N_\mathrm{T}N_\mathrm{RF}$ phase shifters. {In contrast, the SD hybrid beamformers (i.e., MIMO SD hybrid and SPIM-ISAC hybrid) yield higher SE than the SD AO beamformers because they employ $MN_\mathrm{T}N_\mathrm{RF}$ phase shifters. Nevertheless, the proposed SPIM-ISAC hybrid beamformer attains very close performance to SPIM-ISAC SD hybrid beamformer because of its beam-split compensation.   }

		\begin{figure}[t]
			\centering
			\subfloat[]{\includegraphics[draft=false,width=\columnwidth]{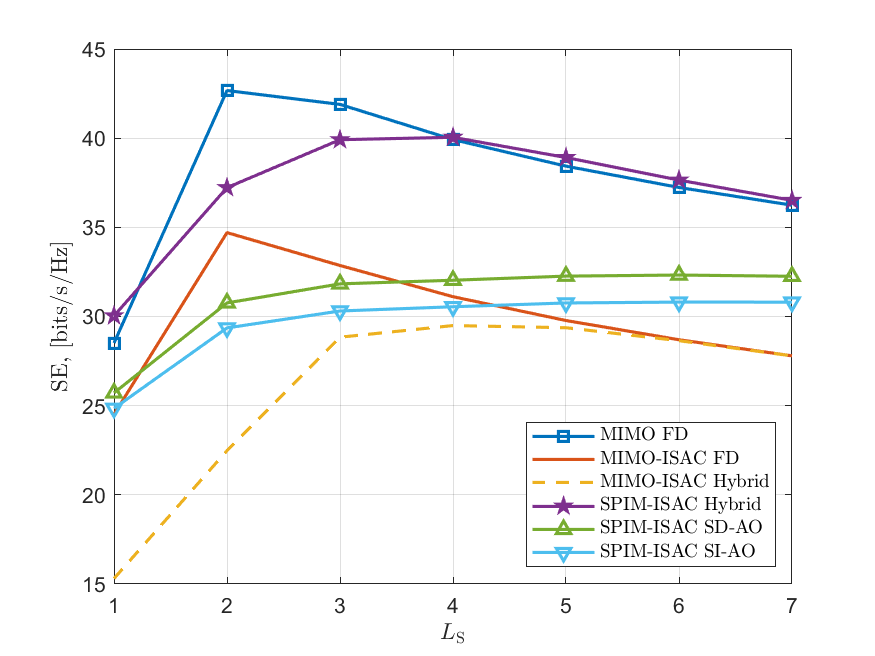} } \\
			\subfloat[]{\includegraphics[draft=false,width=\columnwidth]{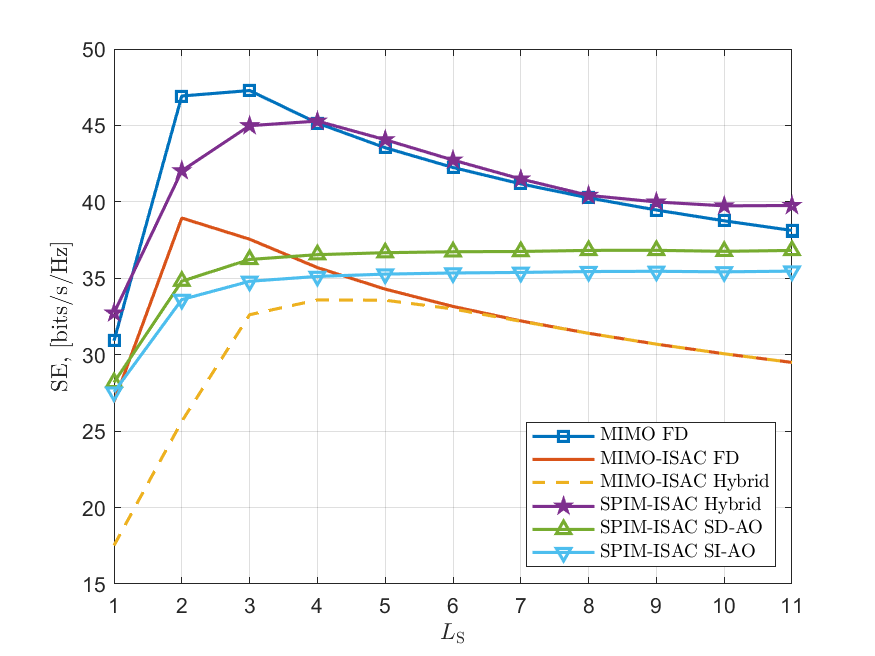} }
			\caption{SE versus $L_\mathrm{S}$ for (a) $L=8$ and (b) $L=12$, respectively, when $\varepsilon = 1$ and $\mathrm{SNR} = 0$ dB.
			}
			
			\label{fig_SE_Ls}
		\end{figure}

		
		Fig.~\ref{fig_SE_eta} shows the system  performance with respect to the trade-off parameter $\varepsilon$ for $(L,L_\mathrm{S}) = (8,3)$. Specifically, Fig.~\ref{fig_SE_eta}(a) explicitly demonstrates the trade-off on both communications (SE) and radar (beamforming gain evaluated at target directions via (\ref{beamPattern})) metrics. We can see that as $\varepsilon \rightarrow 1$, SE of the competing algorithms increases whereas the beamforming gain decreases.  Furthermore, the performance of the proposed SPIM approaches improves as $\varepsilon \rightarrow 1$, as expected, and they demonstrate even higher SE than the MIMO-ISAC JRC design, e.g, when approximately $\varepsilon >0.8$. In Fig.~\ref{fig_SE_eta}(b)  the SE performance is presented for $(L,L_\mathrm{S}) = (12,10)$. Compared to the case $(L,L_\mathrm{S}) = (8,3)$ in Fig.~\ref{fig_SE_eta}(a), the results in Fig.~\ref{fig_SE_eta}(b) yields higher SE for all of the methods. Notably,  the proposed SPIM-ISAC hybrid beamformer achieves much higher SE than that of MIMO FD beamformer for $\varepsilon \geq 0.6$ thanks to additional SE provided via SPIM with higher $L$ and $L_\mathrm{S}$. 
		Fig.~\ref{fig_SE_eta}(b) also shows that SD- and SI-AO beamformers exhibit higher SE performance than that of MIMO-ISAC designs (hybrid and JRC) for up to $\varepsilon \geq 0.8$, however its performance falls behind the MIMO-ISAC  beamformers as $\varepsilon$ further increases. This is because the AO beamformers have limited performance due to the absence of baseband beamformers.

		\begin{figure}[t]
			\centering
			{\includegraphics[draft=false,width=.99\columnwidth]{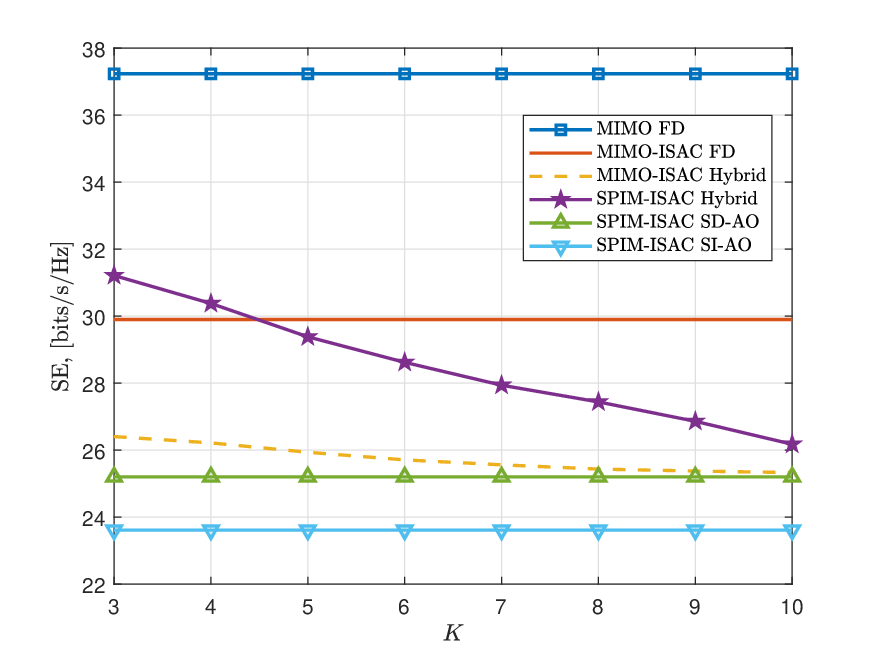} } 
			\caption{SE versus $K$ when $P = L + \bar{K}$ and $N_\mathrm{RF} = L_\mathrm{S} + \bar{K}$, where $\bar{K} = 3$, and $\mathrm{SNR} = 0$ dB.
			}
			
			\label{fig_SE_K}
		\end{figure}

		\begin{figure}[t]
			\centering
			{\includegraphics[draft=false,width=.99\columnwidth]{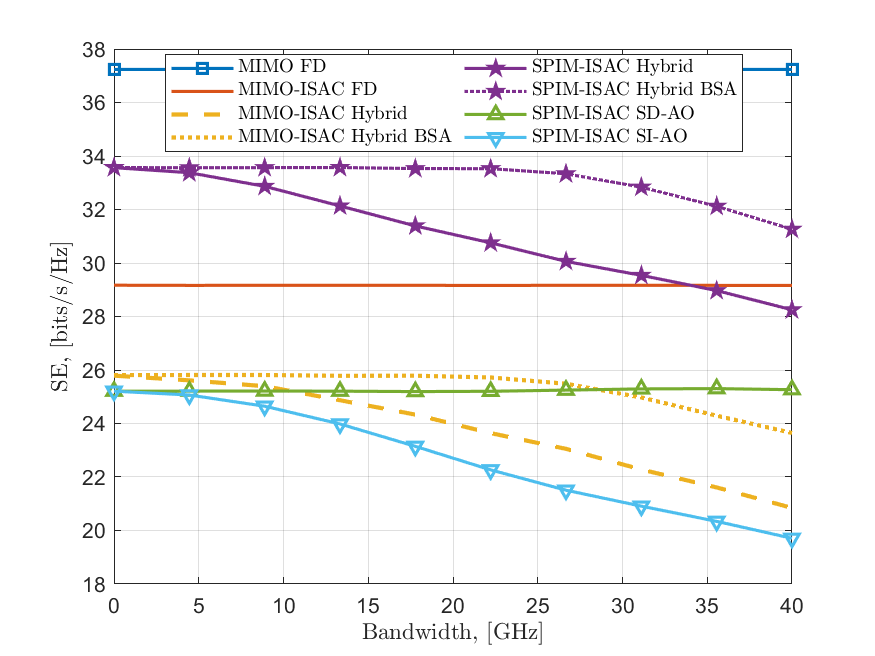} } 
			\caption{SE versus bandwidth for THz system with $f_c = 300$ GHz when $\mathrm{SNR} = 0$ dB and  $\varepsilon = 0.5$.
			}
			
			\label{fig_SE_BW}
		\end{figure}

		\begin{figure*}[t]
			\centering
			\subfloat[]{\includegraphics[draft=false,width=.33\textwidth]{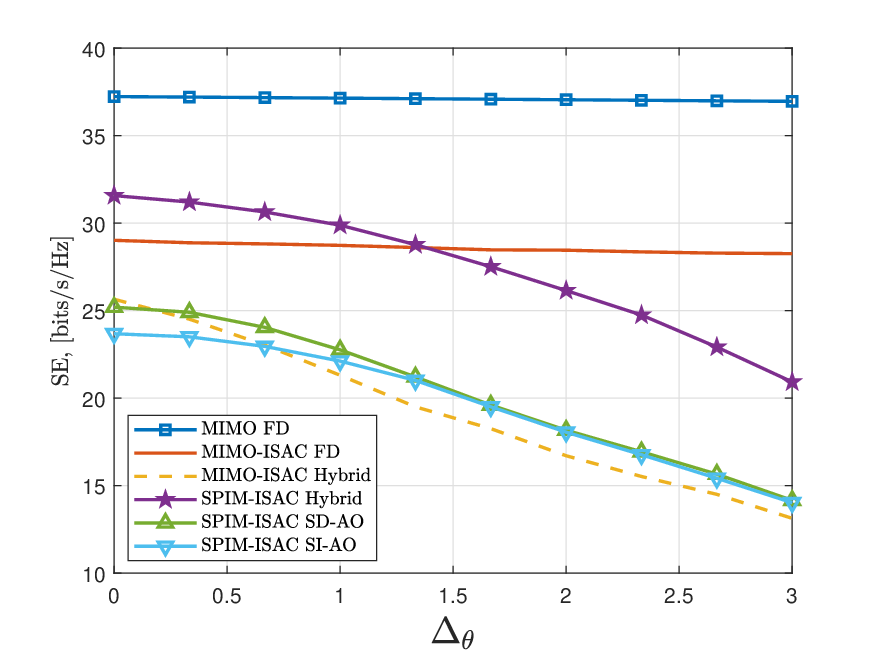} }
			\subfloat[]{\includegraphics[draft=false,width=.33\textwidth]{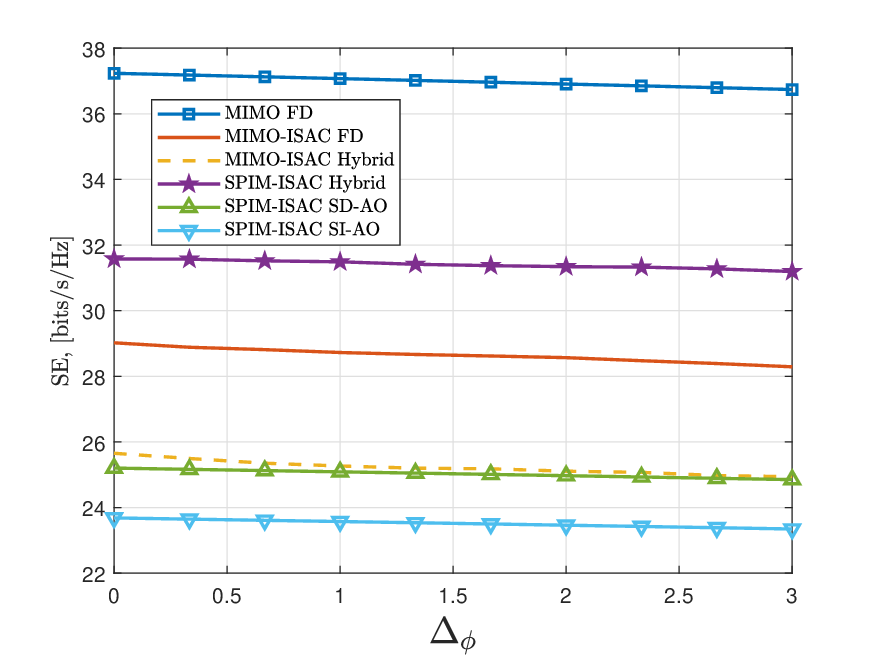} }
			\subfloat[]{\includegraphics[draft=false,width=.33\textwidth]{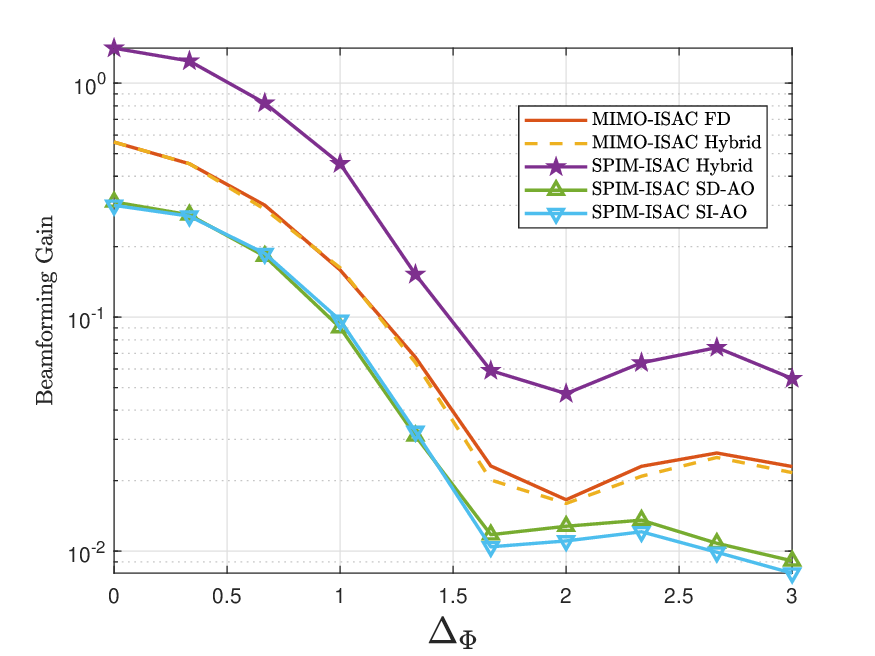} }
			\caption{SE versus the mismatch on (a) DoD $\Delta_{\theta}$ and (b) DoA $\Delta_{\phi}$, and beamforming gain versus mismatch on (c) target DoAs $\Delta_{\Phi}$  when $\varepsilon = 0.5$ and $\mathrm{SNR} = 0$ dB.
			}
			
			\label{fig_SE_uncertainty}
		\end{figure*}

		Fig.~\ref{fig_SE_Ls} shows the SE performance with respect to number of selected SPIM paths $L_\mathrm{S}$ for (a) $L=8$ and (b) $L=12$, respectively, when $\varepsilon = 0.5$ and $\mathrm{SNR} = 0$ dB. As $L_\mathrm{S}$ increases, the performance of AO beamformers is saturated while the SE of the hybrid beamformers first increases then decreases. This is the result of sparse mmWave channel and the unoptimized power allocation of the baseband beamformers to less/more important path components, which can be compensated via \textit{multi-mode beamforming} techniques~\cite{alkhateeb2016frequencySelective}. When compared to the cases $L=8$ and $L=12$, higher SE is achieved for all algorithms in the latter. Furthermore, the proposed SPIM-ISAC hybrid beamformer achieves higher SE than that of MIMO FD beamformer for $L_\mathrm{S}\geq 4$ ($L_\mathrm{S}\geq 4$) when $L=8$ ($L=12$).   Note that the performance improvement obtained from SPIM-ISAC is limited to the number of available spatial paths in the environment. Thus, one cannot always achieve higher SE by employing  SM over more paths since the achieved SE is also limited by the number of RF chains at the cost of higher hardware complexity.

		{We also present the SE performance with respect to the number of targets $K$ in Fig.~\ref{fig_SE_K} while the number of nodes in the switching network at the BS (see Fig.~\ref{fig_BS}) is kept fixed to $P = L + \bar{K}$, where $\bar{K}=3$. We see that the performance of the SPIM-ISAC hybrid beamformer degrades as $K$ increases since the beamformer becomes unable to serve all $K$ targets when $K> \bar{K}$.    }

		In order to demonstrate the performance of the proposed BSA hybrid beamforming technique, the SE of the beamformers are given in Fig.~\ref{fig_SE_BW} with respect to the bandwidth $B \in [0, 40]$ GHz. {In this experiment, we consider the THz scenario with $f_c=300$ GHz, $L=5$ and $L_\mathrm{S}=3$ while the remaining simulation parameters are kept fixed.} However, similar results can also be achieved if the same signal model is used for the mmWave scenario with $d = \frac{\lambda_c}{2}$, which corresponds to $f_c = \frac{300}{5} = 60$ GHz and $B\in [0,8]$ GHz. We can see from Fig.~\ref{fig_SE_BW} that the FD beamformers are not affected by the beam-split since they do not include analog components. The SD-AO beamformer also provides a robust performance against beam-split at the cost reduced SE since it is implemented in SD manner without baseband components. The proposed BSA approach is employed in MIMO-ISAC and SPIM-ISAC hybrid beamformers. We can see that the performance of proposed BSA approach yields robust performance up to approximately $B \leq 30$ GHz. Note that the performance of the proposed BSA hybrid beamforming approach is limited by the number of RF chains. In particular, the beam-split can be fully mitigated only if  $\mathbf{F}_\mathrm{RF}^{(i)} {\mathbf{F}_\mathrm{RF}^{(i)}}^\dagger  = \mathbf{I}_{\mathrm{N}_\mathrm{T}}$, which requires $N_\mathrm{RF} = N_\mathrm{T}$. Nevertheless, the proposed approach has satisfactory performance without employing additional hardware components, e.g., TD networks. In addition, the performance loss because of beam-split is further compensated thanks to additional SE gain via SPIM.

		Fig~\ref{fig_SE_uncertainty} presents the performance analysis with respect to the angular mismatch in the estimated DoD and DoA angles (i.e., $\theta$ and $\phi$) of the communications user as well as the DoA angles of the radar targets (i.e., $\Phi$). During simulations, the mismatch DoD/DoA angles are generated as  $\breve{\kappa} \sim \mathcal{N}(\kappa + \Delta_{\kappa}, (0.1\Delta_{\kappa})^2)$, where $\kappa\in \{\theta, \phi, \Phi \} $. Fig.~\ref{fig_SE_uncertainty}(a) and Fig.\ref{fig_SE_uncertainty}(b) show the SE with respect to $\Delta_{\theta}$ and $\Delta_{\phi}$, respectively, while Fig.~\ref{fig_SE_uncertainty}(c) shows the beamforming gain with respect to $\Delta_{\Phi}$. We can see that the SE is more tolerable to the mismatch in $\phi$ than that of $\theta$ because of $N_\mathrm{R}< N_\mathrm{T}$.

		\begin{figure}[t]
			\centering
			{\includegraphics[draft=false,width=.99\columnwidth]{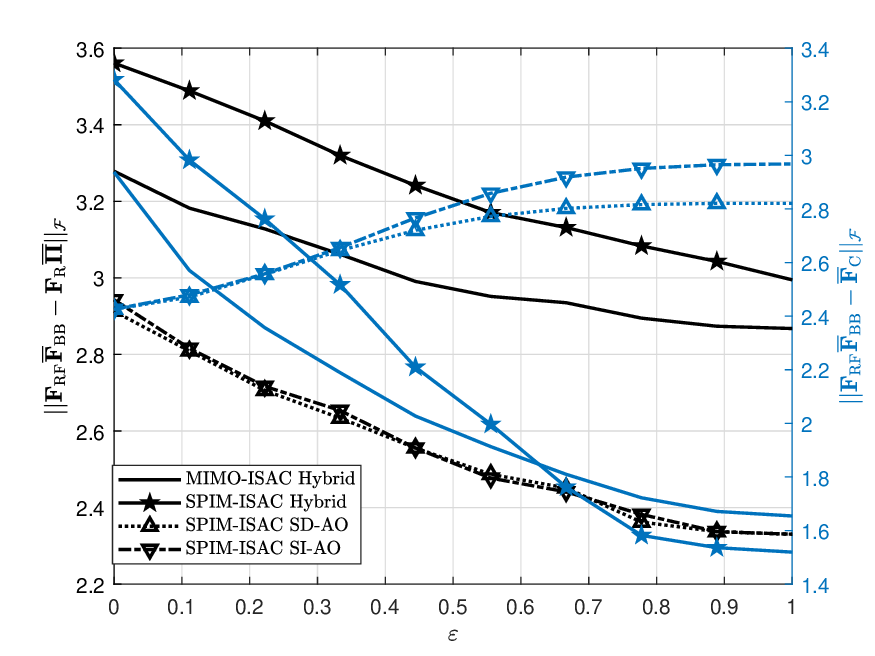} } 
			\caption{Beamforming error for JRC with respect to $\varepsilon$ when  $(L,L_\mathrm{S}) = (8,3)$ and  $\mathrm{SNR} = 0$ dB.
			}
			
			\label{fig_JRC_eta}
		\end{figure}
		
		In order to present the system performance with respect to a joint metric for the beamformers, Fig.~\ref{fig_JRC_eta} shows the JRC performance in terms of the error between the hybrid beamformers and the communication/radar-only beamformers, i.e., $|| \mathbf{F}_\mathrm{RF} \overline{\mathbf{F}}_\mathrm{BB}  -\overline{\mathbf{F}}_\mathrm{C} ||_\mathcal{F}$ and $|| \mathbf{F}_\mathrm{RF}\overline{ \mathbf{F}}_\mathrm{BB} - \mathbf{F}_\mathrm{R}\overline{\mathbf{\Pi}}  ||_\mathcal{F}$, respectively. we can see that the proposed SPIM-ISAC provides less error with respect to $\overline{ \mathbf{F}}_\mathrm{C}$ and $\mathbf{F}_\mathrm{R}\overline{\boldsymbol{\Pi}}$ as compared to the competing beamformers.

		Finally, we present the beampattern of the proposed SPIM-ISAC hybrid beamformer in Fig.~\ref{fig_BP} for $\eta = \{0,0.3,0.5,0.8,1\}$ when only $S = 2$ ($i \in \{1,2\}$) spatial patterns are used. In this scenario, $K=1$ and $(L,L_\mathrm{S}) = (2,1)$. The target is located at  $\Phi_1 = 40^\circ$ while the BS receives the incoming paths from the communications user at $\theta_1 = 50^\circ$ ($i =2$) and $\theta_2 = 60^\circ$ ($i =2$), respectively. The beampattern becomes suppressed at the target direction when $\eta \rightarrow 1$. Conversely, the beampattern at the user locations is minimized when $\eta \rightarrow 0$. This illustrates the effectiveness of our proposed SPIM-ISAC approach. 
		
		\begin{figure}[t]
			\centering
			{\includegraphics[draft=false,width=.99\columnwidth]{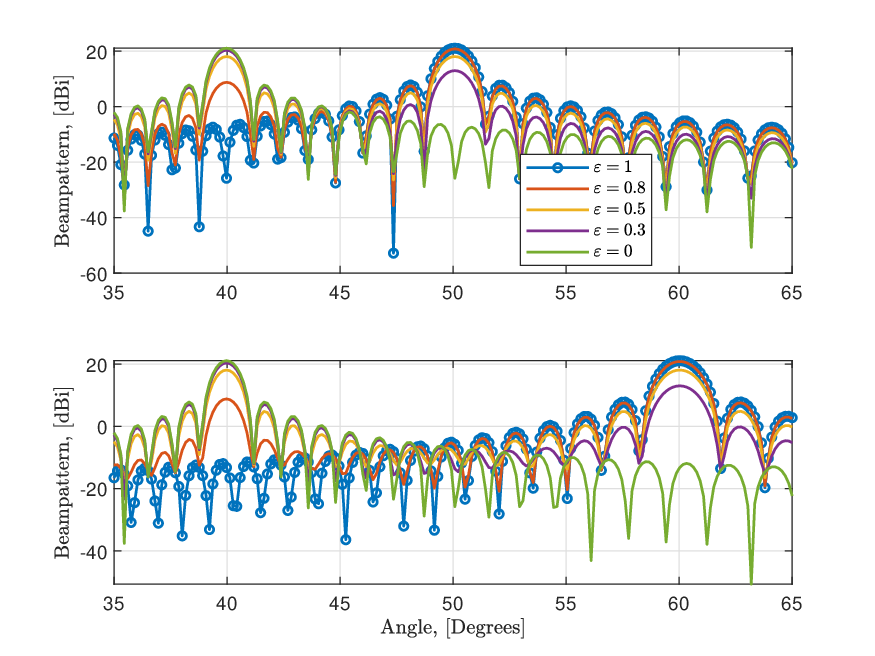} } 
			\caption{Beampattern for $K=1$, $(L,L_\mathrm{S}) = (2,1)$ when $(\Phi_1, \theta_1)$ is (top) $(40^\circ, 50^\circ)$ ($i=1$) and (bottom) $(40^\circ,60^\circ)$ $(i=2)$, respectively, for various values of $\varepsilon = \{0,0.3,0.5,0.8,1\}$. 
			}
			
			\label{fig_BP}
		\end{figure}
		\section{Summary}
		\label{sec:Conc}
		We introduced a SPIM framework for ISAC, wherein the hybrid beamformers are designed by exploiting the spatial scattering paths between the BS and the communications user. We showed that a significant performance improvement is achieved via SPIM-ISAC compared to conventional MIMO-ISAC, wherein only the strongest path is selected for beamformer design. We introduced a family of beamforming techniques: hybrid, BSA hybrid, SI-AO, and SD-AO. We analyzed their respective trade-offs in terms of SE, beamforming gain, and hardware complexity. 
		The proposed SPIM-ISAC hybrid beamformer takes advantage of employing baseband beamformer and its BSA hybrid beamforming technique achieves higher SE than the AO beamformers. 
		Furthermore, the proposed SPIM-ISAC hybrid beamforming approach exhibits significant spectral efficiency performance even higher than that of the usage of MIMO-ISAC FD beamformers in the presence of beam-split. The proposed SPIM-ISAC approach is a viable solution to the performance loss resulting from beam-split for both mmWave and THz systems.

		\appendices

		\section{Proof of Lemma~\ref{lemma1}}
		\label{appen2ArrayGain}
		
		The array gain varies across the whole bandwidth as
		\begin{align}
			\label{arrayGain2}
			A_G({\Phi},{m}) = \frac{|\mathbf{a}_\mathrm{T}^\textsf{H}(\Phi)  \mathbf{a}_\mathrm{T}(\Phi_m)|^2}{N_\mathrm{T}^2}.
		\end{align}
		By using (\ref{steringVec_aT}), (\ref{arrayGain2}) is rewritten  as
		\begin{align}
			&A_G({\Phi},{m})= \frac{1}{N_\mathrm{T}^2} \left| \sum_{n_1 =1}^{N_\mathrm{T}}  \sum_{n_2=1}^{N_\mathrm{T}} e^{-\mathrm{j} \pi  \left( (n_1-1){\Phi_m} - (n_2-1)\frac{\lambda_c\Phi}{\lambda_m}\right)    }  \hspace{-3pt}  \right| ^2 \hspace{-3pt} \nonumber \\
			& \hspace{-3pt}= \hspace{-3pt}\frac{1}{N_\mathrm{T}^2}  \hspace{-3pt} \left| \hspace{-3pt}\sum_{n = 0}^{N_\mathrm{T}-1} e^{-\mathrm{j}2\pi n d \left( \frac{\Phi_m}{\lambda_c} - \frac{\Phi}{\lambda_m}  \right)     }  \hspace{-3pt}  \right|^2  \hspace{-3pt}
			\hspace{-3pt}	=  \hspace{-3pt} \frac{1}{N_\mathrm{T}^2} \hspace{-3pt} \left| \hspace{-3pt}\sum_{n = 0}^{N_\mathrm{T}-1} e^{-\mathrm{j}2\pi n d \frac{(f_c \Phi_m - f_m\Phi) }{c_0}     }  \hspace{-3pt}  \right|^2 \hspace{-3pt} \nonumber\\
			& \hspace{-3pt}=  \hspace{-3pt} \hspace{-3pt}\frac{1}{N_\mathrm{T}^2}  \hspace{-3pt}\left| \hspace{-3pt} \frac{1 - e^{-\mathrm{j}2\pi N_\mathrm{T}d \frac{(f_c \Phi_m - f_m\Phi)}{c_0}    }}{1 - e^{-\mathrm{j}2\pi d\frac{(f_c \Phi_m - f_m\Phi)}{c_0}  }}   \hspace{-3pt} \right|^2  \hspace{-3pt}
			\hspace{-3pt}	=  \hspace{-3pt}\frac{1}{N_\mathrm{T}^2}\left| \frac{\sin (\pi N_\mathrm{T}\mu_m )}{\sin (\pi \gamma_m )}     \hspace{-3pt}\right|^2  \hspace{-3pt} \hspace{-3pt}=  \hspace{-3pt}|\xi( \mu_m )|^2 \hspace{-3pt}, \label{arrayGain}
		\end{align}
		where    $\mu_m = d\frac{(f_c \Phi_m - f_m\Phi)}{c_0}   $. The array gain in (\ref{arrayGain}) implies that most of the power is focused only on a small portion of the beamspace due to the power-focusing capability of $\xi(a)$, which substantially reduces across the subcarriers as $|f_m - f_c|$ increases. Furthermore, $|\xi( \mu_m )|^2$ gives peak when $\mu_m = 0$, i.e.,  $f_c \Phi_m - f_m\Phi= 0$. Thus, we have $ \Phi_m = \eta_m \Phi$, which completes the proof.  \qed

		
		\footnotesize
		\bibliographystyle{IEEEtran}
		\bibliography{IEEEabrv,references_116}

		
	\end{document}